\theoremstyle{definition}
\newtheorem{rem}[theorem]{Remark}}
\newcommand{\egdef}{\stackrel{\mbox{\begin{scriptsize}def\end{scriptsize}}}{=}}
\newcommand{\bigand}{\bigwedge}
\newcommand{\bigor}{\bigvee}
\newcommand{\subword}{\sqsubseteq}
\newcommand{\strictsubword}{\sqsubset}
\newcommand{\supword}{\sqsupseteq}
\newcommand{\xra}{\xrightarrow}
\newcommand{\calA}{\mathcal A}
\newcommand{\calP}{\mathcal P}
\newcommand{\Nat}{\mathbb{N}}
\def\FO{\ComplexityFont{FO}}
\newcommand{\ssubword}{\sqsubset}
\newcommand{\calR}{\mathcal R}
\newcommand{\ssupword}{\sqsupset}
\newcommand{\expr}{\textit{expr}}
\def\TQBF{\ComplexityFont{TQBF}}
\newcommand{\pair}[2]{\begin{bmatrix} #2 \\ #1 \end{bmatrix}}
\newcommand{\VAL}{\textit{V}}
\newcommand{\True}{\textit{true}}
\newcommand{\False}{\textit{false}}
\title{Decidability in the logic of subsequences and
  supersequences\footnote{The first author was partially supported by Tata Consultancy Services and the CEFIPRA Raman-Charpak fellowship. This work was done when he was affiliated to Chennai Mathematical Institute, India and LSV, ENS Cachan, France.}}
\author[1]{Prateek Karandikar}
\author[2]{Philippe Schnoebelen}
\affil[1]{LIAFA, University Paris Diderot, France}
\affil[2]{Laboratoire Specification \& Verification (LSV), Cachan, France}
\authorrunning{P. Karandikar and Ph. Schnoebelen}
\subjclass{
F.4.1 Mathematical Logic, 
F.4.3 Formal Languages
}
\keywords{subsequence, subword, logic, first-order logic, decidability, piecewise-testability, Simon's congruence}
\begin{document}
\maketitle

\makeatletter{}\begin{abstract}
We consider first-order logics of sequences ordered by the subsequence
ordering, aka sequence embedding. We show that the $\Sigma_2$ theory
is undecidable, answering a question left open by Kuske. Regarding
fragments with a bounded number of variables, we show that the $\FO^2$
theory is decidable while the $\FO^3$ theory is undecidable.
\end{abstract}
 
\makeatletter{}\section{Introduction}
\label{sec-intro}

A subsequence of a (finite) sequence $u=(x_1, \ldots, x_\ell)$ is a
sequence obtained from $u$ by removing any number of elements. For
example, if $u=(a, b, a, b, a)$ then $u'=(b, b ,a)$ is a subsequence
of $u$, a fact we denote with $u'\subword u$. Other examples that work
for any $u$ are $u\subword u$ (remove nothing) and $()\subword u$.

In this paper we consider decidability and complexity questions for
the first-order logic of finite sequences with the subsequence ordering as
the only predicate. The notion of subsequence is certainly a
fundamental one in logic, and it occurs prominently in several areas
of computer science: in pattern matching (of texts, of DNA strings,
etc.), in coding theory, in algorithmics, and in many other areas. We
also note that sequences and their subsequences are a special case of
a more general notion where a family of finite labelled structures
(e.g., trees, or graphs, or ..) are compared via a notion of
embedding. Closer to our own motivations, the automatic verification
of unreliable channel systems and related problems generate many
formulae where the subsequence ordering appears
prominently~\cite{abdulla-forward-lcs,BMOSW-fac2012,HSS-lmcs,KS-msttocs}.

While decision methods for logics of sequences have been considered in
several contexts, the corresponding logics usually do not include the
subsequence predicate:
they rather consider the prefix ordering, and/or membership in a
regular language, and/or functions for taking contiguous subsequences
or computing the length of sequences, see,
e.g.,~\cite{hooimeijer2012,ganesh2013,abdulla2014}. \\

As far as we know, Kuske's article~\cite{kuske2006} is the only one
that specifically considers the decidability of the first-order logic
of the subsequence ordering \textit{per se}. The article also
considers more complex orderings since these decidability questions
first occurred in automated deduction under the name of \emph{ordered
  constraints solving} and they involve rather specific orderings on
terms and strings~\cite{comon90}.

Kuske considers the first-order logic of subsequences over a set of
atoms $A$, denoted $\FO(A^*,\subword)$,
and notes that the undecidability of its $\Sigma_4$ theory
can be seen  by reinterpreting an earlier undecidability result
from~\cite{comon94} for the first-order logic of the
lexicographic path ordering. He then  shows 
 that already the
$\Sigma_3$ theory is undecidable even when $A$ contains only two
elements, and also shows that the
$\Sigma_1$ theory is decidable so that the status of the $\Sigma_2$
theory remains open.

\noindent
\emph{Our contribution.}
In this paper we show that the $\Sigma_2$ theory of the subsequence
ordering is undecidable. On the positive side, we show that the
$\FO^2$ theory is decidable (but $\FO^3$ is not). We also prove some
complexity bounds for the decidable fragments: the $\Sigma_1$ theory
is $\NP$-complete and the $\FO^2$ theory is $\PSPACE$-hard.

\noindent
\emph{Outline of the paper.} The relevant definitions and basic
results are given in section~\ref{sec-basics}.
Section~\ref{sec-S2-undec} develops the reduction that proves
undecidability for the $\Sigma_2$ and $\FO^3$ theories.
Section~\ref{sec-purify} presents a further reduction that proves
undecidability for the $\Sigma_2$ theory even when constants are not
allowed in the formulae. Then section~\ref{sec-FO2} shows decidability
for the two-variable fragment $\FO^2$.

Since our constructions heavily rely on concepts and results from
formal language theory, we shall from now on speak of ``words'', and
``letters'' (from an ``alphabet'') rather than sequences and atoms.
Note however that the logic $\FO(A^*,\subword)$ is defined for any
kind of set $A$.

\makeatletter{}\section{Basic notions}
\label{sec-basics}

Let $A=\{a_1,a_2,\ldots\}$ be a set called \emph{alphabet}, whose
elements are called \emph{letters}. 
In this paper we only consider finite alphabets for ease of exposition
but without any real loss of generality.
A \emph{word} is a finite sequence
of letters like $a a c$ and we use $u$, $v$, \ldots, to denote
words, and $A^*$ to denote the set of all words over $A$.
Concatenation of word is written multiplicatively, and $\epsilon$
denotes the empty word. We also use regular expressions like
$(a b+c)^*$ to denote regular languages (i.e., subsets of $A^*$). The
length of a word $u$ is denoted $|u|$ and, for $a\in A$, we let
$|u|_a$ denote the number of occurrences of $a$ in $u$.

We say that a word $u$ is a \emph{subword} (i.e., a subsequence) of
$v$, written $u\subword v$, when $u$ is some $a_1\cdots a_n$ and $v$
can be written under the form $v_0a_1 v_1 \cdots a_n v_n$ for some
$v_0,v_1,\ldots,v_n\in A^*$. We say a word $u$ is a \emph{factor} of a
word $v$ if there exist words $v_1$ and $v_2$ such that $v = v_1 u
v_2$. For $B \subseteq A$, and $w \in A^*$, we define the projection
of $w$ onto $B$, denoted as $\pi_B(w)$, as the subword of $w$ obtained
by removing all letters in $A \setminus B$. For example,
$\pi_{\{a,b\}}(abcaccbbc) = ababb$.

We assume familiarity with basic notions of first-order logic as
exposed in, e.g.,~\cite{harrison2009}: bound and free occurrences of
variables, etc.

In particular, for $n\in\Nat$, the fragment $\FO^n$ consists of all
formulae that only use at most $n$ distinct variables (these can have
multiple occurrences inside the formula).

The fragments $\Sigma_n$ and $\Pi_n$ of $\FO(A^*,\subword)$
are defined inductively as follows:
\begin{itemize}
\item
an atomic formula is in $\Sigma_n$ and $\Pi_n$ for all $n\in\Nat$;
\item
a negated formula $\neg\phi$ is in $\Sigma_n$ iff $\phi$ is in
$\Pi_n$, it is in $\Pi_n$ iff $\phi$ is in $\Sigma_n$;
\item
a conjunction $\phi\land\phi'$ is in $\Sigma_n$ (resp., in $\Pi_n$)
iff both $\phi$ and $\phi'$ are;
\item
For $n>0$, an existentially quantified $\exists x\phi$ is in $\Sigma_n$ iff
$\phi$ is, it is in $\Pi_{n}$ iff $\phi$ is in
$\Sigma_{n-1}$;
\item
For $n>0$, a universally quantified $\forall x\phi$ is in $\Pi_n$ iff $\phi$ is,
it is in $\Sigma_{n}$ iff $\phi$ is in $\Pi_{n-1}$.
\end{itemize}
Note that we do not require formulae to be in prenex normal form when
defining the $\Sigma_n$ and $\Pi_n$ fragments: for example the formula
 $ \forall x\:\exists y (
x \subword y \land \exists x\: \neg (x \subword y)) $ is
simultaneously in $\Pi_2$ and $\FO^2$.

In this article we consider three versions of $\FO(A^*,\subword)$, the
first-order logic of subsequences over $A$:
\begin{description}
\item[The pure logic:] the signature consists of only one predicate
  symbol, ``$\subword$'', denoting the subword relation. One also uses
  a countable set $X=\{x,x',y,z,\ldots\}$ of variables ranging over
  words in $A^*$ and the usual logical symbols.

Note that there is no way in the pure logic to refer to specific
elements of $A$ in the logic. However, whether a formula $\phi$ is
true, denoted $\models_{A^*}\phi$, may depend on $A$ (in fact, its
cardinality). For example, the closed formula
\[
\forall x, y (x\subword y \lor y\subword x)
\:,
\]
stating that $\subword$ is a total ordering, is true if, and only if,
$A$ contains at most one letter.

\item[The basic logic:] extends the pure logic by adding all words
  $u\in A^*$ as constant symbols (denoting themselves). For example,
  assuming $A$ contains $a$, $b$ and $c$, one can write the following sentence:
\[
\exists x (ab\subword x\land bc\subword x\land abc\not\subword x)
\]
which is true, as witnessed by the valuation $x\mapsto b c a b$.

\item[The extended logic:] further allows all regular expressions as
  unary predicates (with the expected semantics). For these predicates
  we adopt a more natural notation, writing e.g.\ $x\in \expr$ rather
  than $P_\expr(x)$. For example, the extended logic allows writing
\[
\forall x\bigl( \bigl[\exists y(y\in (ab)^*\land x\subword y)\bigr] \Leftrightarrow x\in (a+b)^*\bigr)
\]
which states that the regular language $(a+b)^*$ is the downward closure of $(ab)^*$,
i.e., the set of all subwords of its words.
\end{description}
When writing formulae we freely use abbreviations like $x
\strictsubword y$ for $x \subword y \land \neg(y \subword x)$ and $x
\supword y$ for $y \subword x$. Note that equality can be defined as
an abbreviation since $x\subword y\land y\subword x$ is equivalent to
$x=y$. Finally, we use negated symbols as in $x\not\subword y$ or
$x\not\in (ab)^*$ with obvious meaning.

When we write $\FO(A^*,\subword)$ without any qualification we
refer by default to the basic logic.
The pure logic is apparently a very restricted logic, where one may
hardly express more than generic properties of the subword ordering
like saying that $(A^*,\subword)$ is a total ordering, or is a
lattice. However, Theorem~\ref{thm-undec} below shows that the pure
logic is quite expressive.

We conclude this expository section with
\begin{theorem}
\label{thm-S1-NP}
The truth problem for the $\Sigma_1$ fragment of $\FO(A^*,\subword)$
is $\NP$-complete even when restricting to a fixed alphabet.
\end{theorem}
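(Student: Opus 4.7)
The plan is to establish the two directions of this $\NP$-completeness claim separately. $\NP$-hardness comes from a direct reduction from SAT, while $\NP$-membership follows from a polynomial small-model property combined with a guess-and-check procedure.

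For $\NP$-hardness, fix the binary alphabet $A = \{\tta, \ttb\}$. Given a CNF $\phi = C_1 \land \cdots \land C_m$ over Boolean variables $p_1, \ldots, p_n$, I would build the $\Sigma_1$ formula
\[
\varphi_\phi \egdef \exists x_1 \cdots \exists x_n \Bigl( \bigand_{i=1}^n \psi_i \;\land\; \bigand_{j=1}^m \tau(C_j) \Bigr),
\]
where $\psi_i \egdef (\tta \subword x_i \land x_i \subword \tta) \lor (\ttb \subword x_i \land x_i \subword \ttb)$ forces $x_i$ to equal $\tta$ or $\ttb$, and $\tau$ translates each clause into the disjunction of its literals using the atom $\tta \subword x_i$ for $p_i$ and $\ttb \subword x_i$ for $\neg p_i$. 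Once $x_i \in \{\tta,\ttb\}$ is enforced, these atoms faithfully encode a truth assignment, so $\varphi_\phi$ is true iff $\phi$ is satisfiable. Since $|\varphi_\phi|$ is linear in $|\phi|$, SAT reduces in polynomial time to the $\Sigma_1$ truth problem even on a fixed two-letter alphabet.

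For $\NP$-membership, the central claim is a polynomial small-model property: every satisfiable $\Sigma_1$ formula $\varphi = \exists \bar x\:\psi(\bar x)$ of size $N$ admits a satisfying assignment $\nu$ with $|\nu(x_i)| \leq p(N)$ for some fixed polynomial $p$ independent of the input. Granted this, an $\NP$ algorithm just guesses $\nu$ using polynomially many bits and evaluates $\psi(\nu)$ in polynomial time, since each atomic subword test $s \subword t$ takes linear time in $|s|+|t|$.

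To prove the small-model property, I would first nondeterministically commit to a truth value for each atom of $\psi$ that makes $\psi$ true, reducing to a single conjunction of literals. The positive literals $x_i \subword x_j$ induce a preorder on the variables; for each $x_i$, the constants $u$ in its positive literals $u \subword x_i$ together with the variables $x_j$ strictly below it in the preorder determine a ``forced content'' of total length $O(N)$, which already gives a naive polynomial bound on the minimum length of a witness satisfying the positive literals. The main obstacle is to preserve the negative literals with $x_i$ on the left-hand side, namely $x_i \not\subword u$ and $x_i \not\subword x_j$, since these can become false when $\nu(x_i)$ is shortened. To handle this, I would combine the bottom-up construction with a pumping argument: if $|\nu(x_i)|$ far exceeds $N$, then among the $O(N)$ subwords of bounded length that are relevant to $\psi$ some block of $\nu(x_i)$ must be ``redundant'' and can be removed without creating any new subword relationship, and iterating this shrinking step yields the required polynomial bound.
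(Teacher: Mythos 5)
Your $\NP$-hardness argument is correct and is essentially the paper's reduction in a slightly heavier disguise: the paper introduces a single extra witness $z$ and reads the truth value of $p_i$ off the atom $x_i\subword z$ (which works over any non-empty alphabet, including a unary one), whereas you force $x_i\in\{\tta,\ttb\}$ and read the value off $\tta\subword x_i$. Both are fine, and both rely on the same observation that quantifier-free boolean combinations of atoms are permitted inside $\Sigma_1$.

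The gap is in the $\NP$-membership direction. You correctly reduce membership to a polynomial small-model property, but that property is the entire non-trivial content of the upper bound, and your sketch does not establish it; the paper itself does not reprove it but invokes Kuske's result that a satisfiable quantifier-free $\phi(x_1,\ldots,x_n)$ has a satisfying assignment using words of length $O(n)$. Your pumping step is where the argument is missing rather than merely terse. Deleting letters from $\nu(x_i)$ can never \emph{create} a relationship of the form $w\subword \nu(x_i)$ (it only destroys such relationships); the danger, which you correctly identify as the main obstacle, is that it creates relationships of the form $\nu(x_i)\subword w$, falsifying a negative literal $x_i\not\subword x_j$ or $x_i\not\subword u$. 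Having identified the obstacle, you then assert that ``some block of $\nu(x_i)$ must be redundant and can be removed without creating any new subword relationship'' once $|\nu(x_i)|$ exceeds a polynomial bound --- but that existence claim \emph{is} the lemma to be proved, and no argument is given for it. The difficulty is compounded by the interdependence of the variables: whether a block of $\nu(x_i)$ is removable depends on the current (possibly still long) values of the $\nu(x_j)$ appearing in literals $x_i\not\subword x_j$ and $x_j\subword x_i$, and those values are themselves being shrunk, so an iterated one-variable-at-a-time argument needs an explicit processing order and invariant to avoid circularity. To close the gap you should either cite the linear witness bound from Kuske's paper, as the authors do, or actually carry out a minimal-solution argument showing that in a length-minimal satisfying assignment every word has length bounded by a fixed polynomial in the formula size.
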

\begin{proof}[Proof sketch]
The upper bound follows from the decidability proof
in~\cite{kuske2006} since it is proved there that a satisfiable
quantifier-free formula $\phi(x_1,\ldots,x_n)$ can be satisfied with
words of size in $O(n)$ assigned to the $x_i$'s. Guessing linear-sized
witnesses $u_1,\ldots,u_n$ and checking that
$\models_{A^*}\phi(u_1,\ldots,u_n)$ can be done in $\NP$.

For the lower bound, we reduce from boolean satisfiability. Consider a
boolean formula $\phi(x_1,\ldots,x_n)$ over $n$ boolean variables. We
reduce it to an $\FO(A^*,\subword)$ formula in the $\Sigma_1$ fragment
\[
\psi \equiv \exists z, x_1, \ldots, x_n (\phi')
\]
where $\phi'$ is obtained from $\phi$ by replacing each occurrence of
$x_i$ with $x_i \subword z$ (hence replacing $\neg x_i$ with
$x_i\not\subword z$). Then, for any alphabet $A$  with at least one
letter,  $\phi$ is satisfiable if and only if
$\models_{A^*} \psi$.
\end{proof}

\makeatletter{}\section{Undecidability for $\Sigma_2$}
\label{sec-S2-undec}

We are interested in solving the \emph{truth problem}. This asks,
given an alphabet $A$ and a sentence $\phi\in\FO(A^*,\subword)$,
whether $\phi$ is true in the structure $(A^*,\subword)$, written
$\models_{A^*}\phi$. Restricted versions of the truth problems are
obtained for example by fixing $A$ (we then speak of the truth problem
\emph{over $A$}) and/or by restricting to a fragment of the logic.

This section is devoted to proving the following main result.

\begin{theorem}[Undecidability]
\label{thm-undec}
The truth problem for $\FO(A^*,\subword)$ is undecidable even when
restricted to formulae in the $\Sigma_2\cap\FO^3$ fragment of the
basic logic. 
\end{theorem}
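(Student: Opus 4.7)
The plan is to reduce from an undecidable problem, most naturally the halting problem for Minsky (two-counter) machines, or equivalently the Post Correspondence Problem. Given such an instance $M$, one constructs a sentence $\phi_M$ in $\Sigma_2 \cap \FO^3$ of the basic logic, over a fixed alphabet $A$ rich enough to contain control symbols, unary counter letters, and a configuration separator $\#$, such that $\phi_M$ is true in $(A^*,\subword)$ iff $M$ admits a halting computation.

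The outer shape of $\phi_M$ is $\exists w\;\forall y\;\psi(w,y)$ with $\psi$ quantifier-free (possibly using the variable-reuse allowed by $\FO^3$ in non-prenex sub-pieces). The existentially quantified $w$ is the alleged encoding of a full halting run of $M$, namely a sequence of configurations $C_0\,\#\,C_1\,\#\,\cdots\,\#\,C_n$ where each $C_i$ encodes a state together with the two counter values in unary. The universally quantified $y$ serves as a local \emph{witness of error}: the matrix $\psi$ forbids, via a boolean combination of subword and non-subword constraints involving $w$, $y$, and constants from $A$, every shape of $y$ that would testify that $w$ is ill-formed (wrong initial configuration, no halting configuration, missing or mismatched separators) or that $w$ contains an illegal transition between two adjacent configurations.

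I would first set up the well-formedness part of $\psi$, which can be handled by simple subword and non-subword constraints between $w$ and fixed constants (prescribing how $\#$'s, state letters, and counter letters must interleave in $w$). The more delicate part is enforcing local transition correctness: for every pair of adjacent configurations $(C_i, C_{i+1})$ occurring in $w$, $\psi$ must ensure that the change in state and in the two counter values is sanctioned by some transition of $M$. To let a single universally quantified $y$ pinpoint a specific adjacent pair one uses the standard device of auxiliary markers—for instance designated separator letters or padding—so that any $y$ of the prescribed shape which embeds in $w$ is forced to embed into a particular two-configuration window, and any deviation from the transition relation of $M$ translates into a concrete forbidden subword pattern in the pair $(w,y)$.

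The main obstacle is twofold. First, the subword ordering is not positional, so expressing "$C_i$ and $C_{i+1}$ are \emph{adjacent} configurations of $w$" demands a clever choice of markers together with a carefully engineered quantifier-free matrix, all while staying within the $\Sigma_2$ shape (only $\exists$ followed by $\forall$, no further alternation). Second, the whole construction must use at most three distinct variable names; since the fragments $\FO^n$ here are defined without a prenex requirement, one can reuse names inside disjoint subformulas of the universal block to talk about several auxiliary words (windows, local witnesses, etc.), but orchestrating this reuse while retaining the $\Sigma_2$ shape is the combinatorial heart of the proof and is where I expect the real work to lie.
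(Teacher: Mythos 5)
Your high-level plan (reduce from an undecidable problem, existentially guess a witness word, universally rule out local errors) has the right general shape, but the concrete reduction you chose creates a gap that your tools cannot close. Encoding a Minsky-machine run as $C_0\#C_1\#\cdots\#C_n$ with counters in unary forces you to verify, for each adjacent pair of configurations, that two unbounded unary blocks have equal length (or lengths differing by one). A universally quantified error-witness $y$, constrained only by subword and non-subword relations against $w$ and constants, can enforce what are essentially piecewise-testable, regular-like conditions on $w$; it cannot compare the lengths of two adjacent unbounded factors. The same objection applies if you replace the Minsky machine by a Turing machine, or by a ``local'' encoding of PCP in which the match between the top and bottom strings is checked window by window. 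You correctly identify adjacency detection as the combinatorial heart of the argument, but you leave it at ``the standard device of auxiliary markers''---and no such device exists for the equality-of-adjacent-segments problem at the $\Sigma_2$ level of this logic.

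The paper resolves exactly this tension by splitting the verification into two parts of different natures. It reduces from PCP via $\exists x,x'\,\bigl(x\in(1u_1+\cdots+nu_n)^+ \land x'\in(1v_1+\cdots+nv_n)^+ \land \pi_N(x)=\pi_N(x') \land \pi_\Gamma(x)=\pi_\Gamma(x')\bigr)$. The global matching conditions are equalities of \emph{projections of whole words}, which are $\Pi_1\cap\FO^3$-expressible (property P4, built on the fact that $\pi_B(y)\subword x$ is just $\forall z\,((z\subword y\land z\in B^*)\implies z\subword x)$); no adjacent-segment comparison is ever needed. The local conditions are membership in a \emph{regular} language, handled by Lemma~\ref{lem-express-regular}: one existentially guesses a word $y$ encoding an NFA run $q_0a_1q_1\cdots a_nq_n$ over an alphabet enlarged with state symbols, and the localization trick is to quantify universally over $z$ with $x\subword z\subword y$ (where $x=\pi_A(y)$) and exactly two occurrences of state letters---such a $z$ is forced to be $y$ with all but two states erased, and whether those two states are consecutive in the run is detected by whether a single proper letter separates them in $z$. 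This sandwich $x\subword z\subword y$ is the concrete mechanism your proposal is missing. To salvage your approach, switch the reduction target to PCP and adopt this two-part decomposition.
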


This is done by encoding Post's Correspondence Problem in
$\FO(A^*,\subword)$. The reduction is described in several stages.

\subsection{Expressing simple properties}

We start with a list of increasingly complex properties and show how
to express them in the basic $\FO(A^*,\subword)$ logic.
We keep track of what fragment is used, with regards to both the
number of distinct variables, and the quantifier alternation depth.

Note that when we claim that a property with $m$ free variables can be
expressed in $\FO^n$ (necessarily $n \geq m$), we mean that the
formula only uses at most $n$ variables \emph{including the $m$ free
variables}.

We let  $A=\{a_1,...,a_\ell\}$ denote an arbitrary alphabet, use $B$ to denote
subsets of $A$, and $a,b,...$ to denote arbitrary letters from $A$.

\begin{description}
\item[P1. `` $x \in B^*$ '' can be expressed in $\Sigma_0\cap\FO^1$:] using
\[
\bigand_{a \in A \setminus B} a \not \subword x
\:.
\]

\item[P2. `` $\pi_B(y) \subword x$ '' can be expressed in $\Pi_1\cap\FO^3$:]
building on P1, we use
\[
\forall z  \bigl( ( z \subword y \land z \in B^* ) \implies z \subword
x\bigr)
\:,
\]
noting that $\pi_B(y) \subword x$ is equivalent to $\pi_B(y) \subword \pi_{B}(x)$.

\item[P3. `` $x = \pi_B(y)$ '' can be expressed in $\Pi_1\cap\FO^3$:]
building on P1, P2, and using
\[
\pi_B(y) \subword x \land x \subword y \land x \in B^*
\:.
\]

\item[P4. `` $\pi_B(x) = \pi_B(y)$ '' can be expressed in
$\Pi_1\cap\FO^3$:]
building on P2, and using
\[
\pi_B(y) \subword x \land \pi_B(x) \subword y
\:.
\]

\item[P5. `` $x\in aA^*$ '', i.e., `` $x$ starts with $a$ '', can be expressed in $\Sigma_2\cap\FO^3$:] building \linebreak on P1, and using
\[
\exists z \Bigl( a \subword z \land\bigl[\bigand_{\!\!b \in A \setminus \{a\}\!\!} ba \not
  \subword z\bigr] \land z \subword x \land \pi_{A \setminus \{a\}}(x) \subword z
\Bigr)
\:.
\]
Here the first two conjuncts require that $z$ contains an occurrence of $a$
and cannot start with another letter. The last two conjuncts require
that $z$ is a subword of $x$ which has at least all the occurrences in $x$ of all letters other than $a$.

Clearly, the mirror property ``$x\in A^*a$'' can be expressed in
$\Sigma_2\cap\FO^3$ too.

\item[P6. `` $x\not\in A^*aaA^*$ '' can be expressed in $\Sigma_2\cap\FO^3$:] building on
  P3, and using
\[
\exists y \Bigl( y = \pi_{A \setminus \{a\}}(x) \land \forall z\bigl[
(aa \subword z \land y \subword z \land z \subword x) \implies \bigvee_{\!\!b \in A\setminus\{a\}\!\!} aba \subword z \bigr]\Bigr)
\:.
\]
Note that this is equivalent to `` $x$ does not have $aa$ as
  a factor''. Here $z\subword x$ implies that any two occurrences of $a$ in $z$ must
come from $x$. Furthermore, if these are not contiguous in $x$ they
cannot be contiguous in $z$ in view of
$y=\pi_{A\setminus\{a\}}(x)\subword z$.
\end{description}
\begin{rem}
\label{rem-reuse}
Note that the ``$y = \pi_{A \setminus \{a\}}(x)$'' subformula in P6 uses one variable apart from $y$ and $x$. We use the same variable name $z$ that is used later in the formula, so that the formula is in $\FO^3$. We similarly reuse variable names whenever possible in later formulae.
\end{rem}
\begin{description}
\item[P7. `` $x\not\in A^*BBA^*$ ''
  can be expressed in $\Sigma_2\cap\FO^3$:] as in P6 with
\[
\exists y \Bigl( y = \pi_{A \setminus B}(x) \land \forall z\!\!\bigand_{\!\!a,a' \in B\!\!}\bigl[
  ( aa' \subword z \land y \subword z
 \land z \subword x) \implies
\bigor_{\!\!b \in A \setminus B\!\!} aba' \subword z \bigr]\Bigr)
\:.
\]
Note that this is equivalent to `` $x$ has no factor in $BB$ ''.

\item[P8. `` $|\pi_B(x)| = 2$ '' can
  be expressed in $\Sigma_0\cap\FO^1$:] using
\[
\Bigl(\bigor_{\!\!a,a' \in B\!\!} a a' \subword x \Bigr) \;\land \bigand_{\!\!\!\!a,a',a'' \in B\!\!\!\!} a a' a'' \not \subword x
\:.
\]
\end{description}

\subsection{Expressing regular properties}

Building on the previous formulae, our next step is to show how any
regular property can be expressed in the basic logic by using an
enlarged alphabet.

\begin{lemma}
\label{lem-express-regular}
For any regular $L\subseteq A^*$ there is an extended alphabet
$A'\supseteq A$ and a formula $\phi_L(x)$ in $\Sigma_2\cap\FO^3$ over
$A'$ such that for all $u \in A'^*$, $u\in L$ if and only if $\models_{A'{}^*}\phi_L(u)$.
\end{lemma}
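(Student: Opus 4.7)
The plan is to encode accepting runs of a DFA for $L$. Fix $M=(Q,A,\delta,q_0,F)$ recognising $L$, and set $A'=A\cup\hat Q$ with $\hat Q=\{\hat q:q\in Q\}$ a disjoint copy of $Q$ viewed as fresh letters. I would then let
\[
\phi_L(x)\;\equiv\;\exists y\,\bigand_j\psi_j(x,y),
\]
where the intended $y$ is the run encoding $\hat{q_0}\,a_1\,\hat{q_1}\cdots a_n\,\hat{q_n}$ of $x=a_1\cdots a_n$, and the conjuncts $\psi_j$ encode: (i) strict alternation of $y$ between state and input letters, i.e.\ no factor in $AA$ and no factor in $\hat Q\hat Q$ (two applications of P7); (ii) $y$ starts with $\hat{q_0}$ (P5); (iii) $y$ ends with some $\hat q$, $q\in F$ (mirror of P5, disjoined over $F$); (iv) $\pi_A(y)=x$, which also forces $x\in A^*$ (P3); and (v) for every bad triple $(q,a,q')$ with $\delta(q,a)\neq q'$, the word $y$ has no factor $\hat q\,a\,\hat{q}'$.

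The main obstacle is condition (v), which forbids specific three-letter factors and is not directly available from P1--P8. I would handle it by generalising the P6 trick. Setting $B=\{\hat q,a,\hat{q}'\}$, the proposed sub-formula is
\[
\exists u\Bigl(u=\pi_{A'\setminus B}(y)\land\forall z\bigl[(\hat q\,a\,\hat{q}'\subword z\land u\subword z\land z\subword y)\Rightarrow\!\!\!\bigor_{b\in A'\setminus B}\!\!\!(\hat q\,b\,a\,\hat{q}'\subword z\lor\hat q\,a\,b\,\hat{q}'\subword z)\bigr]\Bigr).
\]
Proving the equivalence with ``$y$ has no factor $\hat q\,a\,\hat{q}'$'' is the crux. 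If the factor is present at positions $i,i{+}1,i{+}2$, I would exhibit the falsifying $z$ obtained by taking the non-$B$ letters of $y$ in order and inserting the three letters $\hat q\,a\,\hat{q}'$ in place of the factor; in this $z$ the only $\hat q,a,\hat{q}'$ occurrences are those three consecutive ones, so no $b\in A'\setminus B$ can lie between them. Conversely, suppose no such factor exists. The alternation from (i) then implies that in any subword embedding of $\hat q\,a\,\hat{q}'$ into $y$, some strictly intermediate position carries a letter outside $B$: otherwise the intervening alternating block would use only state letters in $\{\hat q,\hat{q}'\}$ and input letters equal to $a$, and walking its state sequence from $\hat q$ to $\hat{q}'$ would force a consecutive $\hat q$--$a$--$\hat{q}'$, i.e.\ the forbidden factor. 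Such an intermediate letter belongs to $u$, hence to $z$, yielding one of the two disjuncts.

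Finally, each $\psi_j$ sits in $\Sigma_2\cap\FO^3$: conjuncts (i)--(iv) follow directly from P1--P7, and (v) from the above construction. Pulling the auxiliary existentials inside each $\psi_j$ past the outer $\exists y$ merges them into a single $\Sigma_2$ prefix, and by reusing the bound variable names across the disjoint sub-scopes, as done in Remark~\ref{rem-reuse}, the whole formula stays in $\FO^3$.
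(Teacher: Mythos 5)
Your proof is correct, and it shares the paper's overall skeleton: existentially quantify a run-word $y$ over the extended alphabet, force its shape with the P-properties (alternation via P7, endpoints via P5 and its mirror, $\pi_A(y)=x$ via P3, which indeed also rules out $x\notin A^*$), and then verify transition-consistency with a further $\Sigma_2\cap\FO^3$ conjunct. Where you genuinely diverge is in that last step. The paper keeps an NFA and uses a single positive $\Pi_1$ check: for every $z$ with $x\subword z\subword y$ containing exactly two state letters, either those two states are separated by at least two input letters (hence are not adjacent in the run) or they spell a word $qaq'$ with $(q,a,q')\in\delta$. You instead determinise and, for each non-transition $(q,a,q')$, forbid the factor $\hat{q}\,a\,\hat{q}'$ in $y$ by generalising the P6/P7 factor-exclusion device from two-letter to three-letter factors. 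Your correctness argument for that generalisation is sound: the falsifying $z$ when the factor is present works because the only $B$-letters of $z$ are the three inserted ones; and in the converse direction the counting argument ($\pi_{A'\setminus B}(y)\subword z\subword y$ forces every non-$B$ position of $y$ into the image of any embedding of $z$) combines with the alternation of $y$ to produce the required intermediate letter $b\notin B$ — the alternation step is essential and you correctly invoke it, including implicitly the degenerate case $q=q'$. Both routes yield $\Sigma_2\cap\FO^3$ with the variable-reuse of Remark~\ref{rem-reuse}. The paper's version is slightly more economical (one conjunct rather than one per bad triple, and no new factor lemma to justify) and works directly with an NFA; yours buys a reusable ``$y$ has no factor $w$'' primitive for $|w|=3$, at the cost of having to prove its correctness, which you do.
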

\begin{proof}
Let $\calA=(Q, A, \delta, I, F)$ be a NFA recognising $L$ so that
$u\in L$ iff $\calA$ has an accepting run on input $u$. We define
$\phi_L(x)$ so that it states the existence of such a run, i.e., we
put $\phi_L(x)\equiv \exists y\:\psi_\calA(x,y)$ where
$\psi_\calA(x,y)$ expresses that ``$y$ is an accepting run of $\calA$
over $x$.'' \\

Let $A'\egdef A \cup Q$, assuming w.l.o.g.\  that $A$ and $Q$ are
disjoint. A run $q_0 \xra{a_1} q_1 \xra{a_2} \ldots \xra{a_n} q_n$ of
$\calA$ can be seen as a word $q_0a_1q_1a_2\ldots a_n q_n$ in
$A'{}^*$.
We now define $\psi_\calA(x,y)$ as the
conjunction $\psi_1(x,y)\land\psi_2(x,y)$, with
\begin{align*}
\psi_1&\equiv\;
\begin{array}{>{\displaystyle}l}
\;\;\;\;  (y\text{ has no factor from } AA)\ \land \ (y\text{ has no factor from } QQ)
\\[.3em]
\land\ \bigl( \bigor_{q \in I} y\text{ begins with } q\bigr)\ \land\
\bigl( \bigor_{q \in F} y\text{ ends with } q\bigr)
\ \land \ (\pi_A(y)=x )\:,
\end{array}
 \\[1em]
\psi_2&\equiv\;
\forall z
\Biggl(
\begin{array}{>{\displaystyle}l}
(x \subword z \land z \subword y \land z\text{ has exactly two occurrences of
  letters from } Q)\\[.3em]
\;\;\;\implies
\Bigl(\bigor_{q,q'\in Q}\bigor_{a,a'\in A} q a a' q' \subword z
\:\lor\: \bigor_{(q, a, q') \in \delta} q a q' \subword z \Bigr)
\end{array}\Biggr)
\:.
\end{align*}
Here $\psi_1$ reuses simple properties from the previous subsection
and states that $y$ is a word alternating between $Q$ (states of
$\calA$) and $A$ (proper letters), starting with an initial state of
$\calA$ and ending with an accepting state, hence has the required
form $q_0 a_1 \ldots a_n q_n$. Furthermore,
$\pi_A(y)=x$ ensures that $y$ has the form of an accepting run over
$x$. Note that it also ensures $x \in A^*$.

With $\psi_2$, one further ensures that the above $y$ respects the
transition table of $\calA$, i.e., that $(q_{i-1},a_i,q_i)\in\delta$
for $i=1,\ldots,n$. Indeed, assume $z\in A'{}^*$ satisfies $x\subword
z\subword y$ and contains two occurrences from $Q$. Thus $z$ is
$a_1\ldots a_{i}q_ia_{i+1}a_{i+2}\ldots a_{j}q_ja_{j+1}a_{j+2}\ldots
a_n$ for some $1\leq i<j\leq n$. If now $j>i+1$ then $z$ contains
$q_ia_{i+1}a_{i+2}q_j$ as a subword and the disjunction after the
implication is fulfilled. However, if $j=i+1$, the only way to fulfil
the disjunction is to have $(q_{j-1},a_j,q_j)\in\delta$.

Finally, $\psi_\calA(x,y)$ exactly states that $y$ is an accepting run
for $x$ and $\models_{A'{}^*}\phi_L(u)$ holds iff $u\in L$. One easily
checks that $\psi_1$ is in $\Sigma_2 \cap \FO^3$, $\psi_2$ is in $\Pi_1 \cap \FO^3$, so that
$\psi_\calA$ and $\phi_L$ are in $\Sigma_2 \cap \FO^3$. We reuse variables
wherever possible to ensure that only three variables are used (see remark
\ref{rem-reuse}). For example, the implementation of ``$y$
has no factor from $QQ$'' from P7 needs two other variables, and here
we use $x$ and $z$ for it.
\end{proof}

\subsection{Encoding Post's Correspondence Problem}
\label{ssec-PCP-encoding}

It is now easy to reduce Post's Correspondence Problem to the truth
problem for the basic $\FO(A^*,\subword)$ logic.

Suppose we have a PCP instance $\calP$ consisting of pairs
$(u_1,v_1),\ldots (u_n,v_n)$ over the alphabet $\Gamma$. We let $N =
\{1, \ldots, n\}$, consider the alphabet $A\egdef \Gamma\cup N$, and
define
\begin{equation}
\label{eq-phiP}
\phi_\calP \equiv \exists x, x'
\biggl( \begin{array}{l}
        \;\;\;\; x \in (1u_1 + \cdots + nu_n)^+ \ \land \ x' \in (1v_1 + \cdots + nv_n)^+
\\[.3em]
        \land \ \pi_N(x) = \pi_N(x') \land \pi_\Gamma(x) = \pi_\Gamma(x')
\end{array} \biggr)
\:.
\end{equation}
Clearly, $\phi_\calP$ is true iff the PCP instance has a solution.

It remains to check that $\phi_\calP$ is indeed a formula in the
$\Sigma_2$ fragment: this relies on
Lemma~\ref{lem-express-regular} for expressing membership in two
regular languages, and the P4 properties for ensuring that $x$ and
$x'$ contain the same indexes from $N$ and the same letters from
$\Gamma$. Finally, we note that $\phi_\calP$ is also a $\FO^3$ formula.

\makeatletter{}\section{Undecidability for the pure logic}
\label{sec-purify}

In this section we give a stronger version of the undecidability
for the $\Sigma_2$ fragment.

\begin{theorem}[Undecidability for the pure logic]
\label{thm-undec-pure}
The truth problem for $\FO(A^*,\subword)$ is undecidable even when
restricted to formulae in the $\Sigma_2$ fragment of the
pure logic.
\end{theorem}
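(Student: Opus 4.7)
The plan is to ``purify'' the basic $\Sigma_2$ formula $\phi_\calP$ built in Theorem~\ref{thm-undec} by replacing every constant symbol with a fresh variable, bound by the outer existential quantifier of $\phi_\calP$ and constrained by a preamble that forces it to denote the intended word. Because the existential block at the front of a $\Sigma_2$ formula can absorb arbitrarily many variables and conjunctions of $\Sigma_2$ formulae stay in $\Sigma_2$, this framework lets us define constants explicitly without leaving the $\Sigma_2$ fragment.

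First, the empty word and the letters are definable in the pure logic. Introducing a fresh variable $e$, the assertion ``$e$ is the empty word'' is the $\Pi_1$ formula $\forall y(e \subword y)$; given $e$, ``$\alpha$ is a letter'' becomes $\neg(\alpha \subword e) \land \forall y(y \subword \alpha \implies y \subword e \lor \alpha \subword y)$, also in $\Pi_1$. Working over $(A'^*, \subword)$ where $A'$ is the alphabet used by Theorem~\ref{thm-undec}, I would put into the outer existential block of $\phi_\calP'$ the variables $e, \alpha_1, \ldots, \alpha_\ell$ (one per letter of $A'$), with preamble conjuncts asserting that $e$ is empty, each $\alpha_i$ is a letter, and the $\alpha_i$ are pairwise distinct. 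Since $(A'^*, \subword)$ has exactly $\ell$ atoms, the $\alpha_i$ necessarily enumerate every letter, so that ``$y$ is a letter'' collapses to the quantifier-free $\bigor_i(y \subword \alpha_i \land \alpha_i \subword y)$.

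Next, for each word constant $u = \alpha_{i_1}\cdots\alpha_{i_k}$ of length $k \geq 2$ appearing in $\phi_\calP$ (inspection of the construction of Theorem~\ref{thm-undec} shows $k \leq 4$ suffices), I would add a fresh variable $w_u$ to the outer block and a preamble conjunct forcing $w_u = u$. That conjunct combines a recursive $\Pi_1$ assertion that $w_u$ has length $k$ (every proper subword of $w_u$ is empty, a letter, or has length less than $k$); quantifier-free assertions that $w_u$ uses only the letters of $u$ and, via auxiliary helper variables $h_{i,j}$ predefined in the outer block to equal $\alpha_i^j$, that $w_u$ has the correct letter multiplicities (atomic conjuncts $h_{i,j} \subword w_u$ or $h_{i,j} \not\subword w_u$); $\Sigma_2$ assertions via P5 and its mirror that $w_u$ starts with $\alpha_{i_1}$ and ends with $\alpha_{i_k}$; and atomic conjuncts $w_{u'} \subword w_u$ or $w_{u'} \not\subword w_u$ for length-$2$ variables $w_{u'}$ to pin down the order of the interior letters. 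Each conjunct is in $\Sigma_2$, so the whole preamble is in $\Sigma_2$.

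Finally, $\phi_\calP'$ is obtained by replacing in $\phi_\calP$ every letter constant $a$ by $\alpha_a$ and every multi-letter constant $u$ by $w_u$, then conjoining the preamble inside one outer existential block. The resulting pure $\Sigma_2$ sentence is true in $(A'^*,\subword)$ exactly when some valuation realises the preamble together with witnesses for the rewritten formula; since the canonical valuation mapping $\alpha_i$ to the $i$-th letter of $A'$ and each $w_u$ to $u$ always satisfies the preamble, this is equivalent to $\models_{A'^*}\phi_\calP$, i.e.\ to the PCP instance having a solution. The main technical obstacle is formulating ``$w_u = u$'' in pure $\Sigma_2$ without creating forbidden quantifier alternations; this is handled by performing all such definitions in the outer existential block, exploiting closure of $\Sigma_2$ under existential quantification and conjunction, and using that $\phi_\calP$ contains only finitely many word constants of bounded length.
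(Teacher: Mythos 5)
Your overall strategy --- one fresh variable per constant, all bound in the outer existential block and constrained by a $\Sigma_2$ preamble, then substituted into $\phi_\calP$ --- is exactly the paper's strategy, and your remarks about $\Sigma_2$ being closed under conjunction and outer existential quantification are correct. The genuine gap is that you never confront the mirror-invariance obstruction, which is the central difficulty the paper's proof is organised around. By Lemma~\ref{lem-mirroring}, any pure formula satisfied by $(u_1,\ldots,u_n)$ is also satisfied by $(\widetilde{u_1},\ldots,\widetilde{u_n})$; since single letters are their own mirror images, no pure preamble can ``force $w_u=u$'' when $u\neq\widetilde{u}$ --- the globally mirrored valuation always satisfies it as well. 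Worse, your mechanism for fixing the order of letters inside $w_u$ is circular at its base case: expressing ``$w_{ab}$ starts with $a$'' via property P5 already requires the length-$2$ constants $ca$ for all $c\neq a$, i.e.\ the very objects being defined. Nothing in your preamble rules out a valuation that assigns $w_{ab}\mapsto ba$ while simultaneously assigning $w_{cd}\mapsto cd$; such inconsistently oriented valuations satisfy all your per-word constraints (each word is pinned down at best up to its own mirror image) but corrupt the encoding of the regular languages inside $\phi_\calP$, so the ``only if'' direction of your final equivalence fails.

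The paper supplies precisely the two ingredients missing here. First, the formulas $\xi_{i,j,k}$ (conjunct $\psi_{11}$) force all the length-$2$ variables $y_{i,j}$ to have the \emph{same} orientation, so that every satisfying valuation is either the intended one or its global mirror --- and a globally mirrored valuation still correctly encodes the PCP instance, which is why ``up to one global mirroring'' is good enough while ``up to independent per-word mirroring'' is not. Second, Lemma~\ref{lem-simnnp1} (Simon's congruence $\sim_n$ is trivial on words of length $n+1$) is what licenses defining each constant of length $k+1$ solely by which already-defined words of length at most $k$ are or are not its subwords; your weaker invariant (length, Parikh data, endpoints, and length-$2$ subwords) does not obviously suffice beyond length $4$ and in any case inherits the unaddressed orientation ambiguity through the ``starts with''/``ends with'' predicates. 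To repair your proposal you would need to add an orientation-synchronisation step equivalent to $\xi_{i,j,k}$ and restate your correctness claim as ``the preamble is satisfied exactly by the canonical valuation and its global mirror.''
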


The proof is by constructing a $\Sigma_2$ formula $\psi(x_1,\ldots)$
in the pure logic that defines all the letters and constant words
we need to reuse the reduction from the previous section.

Kuske solves the problem in the special case of a
formula using only $\{\epsilon,a,b,ab,ba,aa,\linebreak[0] bb, \linebreak[0] aba,bab\}$ as
constants~\cite{kuske2006}.
We provide a more generic construction whereby all words (up to a
fixed length) can be defined in a single $\Sigma_2$ formula.
One inherent difficulty is that it is impossible to properly define
constant words in the pure logic. Of course, with the pure
logic one can only define properties up to a bijective renaming of
the letters, so $\psi(x_1,\ldots)$ will only define letters and words
up to renaming. But a more serious problem is that we can only define
properties \emph{invariant by mirroring} as we now explain.

For a word $u=a_1 a_2\ldots a_\ell$, we let
$\widetilde{u}$ denote its mirror image $a_\ell\ldots a_2 a_1$.
\begin{lemma}[Invariance by mirrorring]
\label{lem-mirroring}
If $\psi(x_1,\ldots,x_n)$ is a formula in the pure logic and
$u_1,\ldots,u_n$ are words in $A^*$, then
$\models_{A^*}\phi(u_1,\ldots,u_n)$ if, and only if,
$\models_{A^*}\phi(\widetilde{u_1},\ldots,\widetilde{u_n})$.
\end{lemma}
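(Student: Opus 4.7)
The plan is to observe that the map $u\mapsto\widetilde{u}$ is an automorphism of the structure $(A^*,\subword)$, and that any first-order formula in the pure logic (whose signature consists solely of $\subword$) is preserved under automorphisms. The lemma then follows by structural induction.

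First I would verify the automorphism claim: the mirroring map is an involution, hence a bijection of $A^*$, and it preserves the subword relation in both directions. Concretely, if $u=a_1\cdots a_n$ and $v=v_0 a_1 v_1\cdots a_n v_n$ witnesses $u\subword v$, then $\widetilde{v}=\widetilde{v_n} a_n \widetilde{v_{n-1}}\cdots a_1 \widetilde{v_0}$ exhibits $\widetilde{u}=a_n\cdots a_1$ as a subword of $\widetilde{v}$. Since mirroring is an involution, the converse implication is the same statement applied to $\widetilde{u},\widetilde{v}$. Thus $u\subword v\iff\widetilde{u}\subword\widetilde{v}$, so mirroring is an automorphism of $(A^*,\subword)$.

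Next I would prove, by straightforward structural induction on $\phi$, the stronger statement that for every pure-logic formula $\phi(x_1,\ldots,x_n)$ and all words $u_1,\ldots,u_n\in A^*$,
\[
\models_{A^*}\phi(u_1,\ldots,u_n)\quad\Longleftrightarrow\quad\models_{A^*}\phi(\widetilde{u_1},\ldots,\widetilde{u_n}).
\]
The atomic case $x_i\subword x_j$ is exactly the automorphism property established above. The Boolean connectives $\neg$, $\land$, $\lor$ are handled componentwise. For the existential case $\phi=\exists y\,\phi'$, if $u$ is a witness in the untilded instance, then $\widetilde{u}$ is a witness in the tilded instance by the induction hypothesis applied with the extra variable, and vice versa since mirroring is a bijection; the universal case is dual (or obtained via $\forall=\neg\exists\neg$).

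There is no real obstacle here: the only point to be careful about is that the pure logic has no constant symbols and no predicates other than $\subword$, so the atomic case is covered, and the induction proceeds uniformly. The lemma is the special case of this invariance statement where $\phi$ has exactly the free variables $x_1,\ldots,x_n$.
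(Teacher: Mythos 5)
Your proof is correct and matches the paper's own argument: the paper also proves the lemma by structural induction on $\phi$, with the key observation that $u\subword v$ iff $\widetilde{u}\subword\widetilde{v}$ handling the atomic case. Your additional framing of mirroring as an automorphism of $(A^*,\subword)$ is just a cleaner packaging of the same idea.
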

\begin{proof}[Proof Sketch]
By structural induction on $\phi$, noting that the only atomic
formulae in the pure logic have the form $x\subword y$, and that
$u\subword v$ iff $\widetilde{u}\subword\widetilde{v}$ for any $u,v\in
A^*$.
\end{proof}

\subsection{Defining letters and short constant words}

We now define $\psi(x_1,\ldots)$. In our construction $\psi$ has the
form $\psi_1\land \psi_2\land\cdots \land\psi_{13}$ and features a large number
of free variables. We describe the construction in several stages,
explaining what valuation of its free variables can make $\psi$ true.
We start with
\begin{align}
       & \textstyle{\forall y(z\subword y)}
\tag{$\psi_1$}
  \\
\land\;&\textstyle{\bigand_{1\leq i\not=j\leq n} x_i\not\subword x_j}
\tag{$\psi_2$}
\\
\land\;&  \textstyle{\bigand_{i=1}^n
 \forall y [y\subword x_i\implies (x_i\subword y\lor y\subword z)]}
\tag{$\psi_3$}
\end{align}
Here $\psi_1$ implies $z=\epsilon$, then $\psi_2$ implies
$x_i\neq\epsilon$ so that $\psi_3$ requires that each $x_i$ is a
single letter and furthermore $x_1,\ldots,x_n$ must be different
letters as required by $\psi_2$.

We continue with:
\begin{align}
\land\;&
  \textstyle{\bigand_{i=1}^n
\bigl( x_i\subword x^2_i\land x^2_i\not\subword x_i\land \forall y[y \subword x^2_i\implies (y\subword x_i\lor x^2_i\subword y)] \bigr)}
\tag{$\psi_4$}
\end{align}
Note that $n$ new free variables, $x^2_1,\ldots,x^2_n$  are
involved. First $\psi_4$ requires that any $x^2_i$ has at least two
letters (it must contain $x_i$ strictly). But it also requires that
any subword of $x^2_i$ is $\epsilon$ or $x_i$ or $x^2_i$, thus $x^2_i$
has length $2$ and
 can only be $x_ix_i$.

In the same style we introduce new free variables $x^3_1,\ldots,x^3_n$
and $x^4_1,\ldots,x^4_n$ and require that $x^3_i$ equals $x_i x_i
x_i$, and that $x^4_i$ equals $x_i x_i x_i x_i$ with:
\begin{align}
\land\;&
\textstyle{\bigand_{i=1}^n
\bigl( x^2_i\subword x^3_i\land x^3_i\not\subword x^2_i\land \forall y[y \subword x^3_i\implies (y\subword x^2_i\lor x^3_i\subword y)] \bigr)}
\tag{$\psi_5$}
\\
\land\;&
\textstyle{\bigand_{i=1}^n
\bigl( x^3_i\subword x^4_i\land x^4_i\not\subword x^3_i\land \forall y[y \subword x^4_i\implies (y\subword x^3_i\lor x^4_i\subword y)] \bigr)}
\tag{$\psi_6$}
\end{align}

We introduce new free variables $\{y_{i,j}\}_{1\leq i\neq j\leq n}$ and conjuncts:
\begin{align}
\land\;&
\textstyle{\bigand_{1\leq i\neq j\leq n}
\forall y\:(y\subword y_{i,j}\implies y\subword z\lor x_i\subword y\lor
x_j\subword y)}
\tag{$\psi_8$}
\\
\land\;&
\textstyle{\bigand_{1\leq i\neq j\leq n} \bigl(
x_i\subword y_{i,j}\land
x_j\subword y_{i,j}\land
x^2_i\not\subword y_{i,j}\land
x^2_j\not\subword y_{i,j})}
\tag{$\psi_9$}
\\
\land\;&
\textstyle{\bigand_{1\leq i\neq j\leq n} \bigl(y_{i,j}\not\subword y_{j,i}\bigr)}
\tag{$\psi_{10}$}
\end{align}
Here $\psi_8$ requires that any $y_{i,j}$ only contains letters among
$x_i$ and $x_j$, and $\psi_9$ requires that it contains exactly one
occurrence of $x_i$ and one of $x_j$. So that $y_{i,j}$ is either
$x_ix_j$ or $x_jx_i$.  With $\psi_{10}$ we require that  $y_{j,i}$
is, among $x_ix_j$ and $x_jx_i$,  the word not assigned to $y_{i,j}$.

Now, in view of Lemma~\ref{lem-mirroring}, it is impossible to fix
e.g.\ $y_{i,j}=x_ix_j$. However we can force all $y_{i,j}$ to have
``the same orientation''. Let $i,j,k$ be three different indexes in
$\{1,\ldots,n\}$ and consider the following formula
\begin{align*}
\xi_{i,j,k}\equiv
\exists t
\left[\begin{array}{rlcl}
&\forall y(y\subword t\implies y\subword z\lor x_i\subword y\lor
  x_j\subword y\lor x_k\subword y)
&\;\;\;\;&(\xi_1)
\\
\land&
x^2_i\subword t \land x^3_i\not\subword t\land x_j\subword t\land x^2_j\not\subword t\land x_k\subword t\land x^2_k\not\subword t
&&(\xi_2)
\\
\land&
y_{i,j}\subword t\land y_{j,i}\subword t
\land y_{i,k}\subword t\land y_{k,i}\not\subword t
\land y_{j,k}\subword t\land y_{k,j}\not\subword t
&&(\xi_3)
\end{array}
\!\right]
\end{align*}
We claim that, in conjunction with the earlier $\psi$-conjuncts,
$\xi_1\land\xi_2\land \xi_3$ requires $t=x_ix_jx_ix_k$ or
$t=x_kx_ix_jx_i$: indeed by $\xi_1$, $t$ only contains letters among
$\{x_i,x_j,x_k\}$, then by $\xi_2$, $t$ contains exactly 2 occurrences
of $x_i$ and exactly one occurrence each of $x_j$ and $x_k$, then by
$\xi_3$, $t$ has $x_ix_j$ and $x_jx_i$ as subwords, so the single
occurrence of $x_j$ is between the two occurrences of $x_i$ and, by
$\xi_3$ again, the occurrence of $x_k$ is outside the two $x_i$
occurrences.
Finally, satisfying $\xi_{i,j,k}$ requires $y_{i,k}$ and $y_{j,k}$ to
have the same orientation.
\\

We continue the construction of $\psi$ with:
\begin{align}
\land\;&
  \textstyle{\bigand_{1\leq i\neq j\leq n} \bigand_{k\not\in\{i,j\}}\xi_{i,j,k}}
\tag{$\psi_{11}$}
\end{align}
As just explained, this will force all $y_{i,j}$'s to have the same
orientation, i.e., any satisfying assignment will have
$y_{i,j}=x_ix_j$ for all $i,j$, or $y_{i,j}=x_jx_i$ for all $i,j$. \\

\subsection{Defining long constant words}

Once we have defined all words of length $2$ (up to mirroring) over
the alphabet $\{x_1,\ldots,x_n\}$ (up to renaming), it is easier to
systematically define all words of length $3$, $4$, etc. Actually, we
only use constant words of length at most $4$ for the formula $\phi_\calP$ from
section~\ref{sec-S2-undec}.

The general strategy relies on a technical lemma we now explain. For
$n\in\Nat$ we say that two words $u$ and $v$ are $n$-equivalent, written
$u \sim_n v$, if $u$ and $v$ have the same set of subwords of length
up to $n$. Thus $\sim_n$ is the piecewise-testability congruence
introduced by Simon, see~\cite{simon75,sakarovitch83}.
\begin{lemma}[See appendix]
\label{lem-simnnp1}
Let $n \geq 2$, and let $u$ and $v$ be words of length $n+1$ with $u \neq v$. Then $u \not \sim_n v$.
\end{lemma}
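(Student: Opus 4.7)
My plan is to prove the lemma by induction on $n \geq 2$. The base case $n = 2$ handles distinct length-$3$ words directly: if the letter multisets of $u$ and $v$ differ, they are distinguished by length-$1$ subwords (recording which letters occur) or by subwords of the form $cc$ of length $2$ (recording which letters occur at least twice); otherwise $u$ and $v$ are distinct permutations of a common $3$-letter multiset, and a finite enumeration of the (at most six) permutations shows that the sets of length-$2$ subwords are pairwise distinct.

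For the inductive step ($n \geq 3$), I assume the lemma for $n-1$ and take $|u| = |v| = n+1$ with $u \sim_n v$; suppose for contradiction $u \neq v$. The heart of the argument is a sub-claim: whenever $u \sim_n v$ with $n \geq 2$, the words $u$ and $v$ share their first letter. Granting this, write $u = a u'$ and $v = a v'$ with $|u'| = |v'| = n$; I then show $u' \sim_{n-1} v'$ as follows. If $w \subword u'$ has $|w| \leq n-1$, then $aw \subword u$ (using position $1$ for the leading $a$), so $aw \subword v = a v'$ by $u \sim_n v$; any embedding of $aw$ into $v$ sends the leading $a$ to some position $j \geq 1$ and embeds $w$ into positions strictly greater than $j$, all of which lie inside $v'$, so $w \subword v'$. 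The converse direction is symmetric. Since $n-1 \geq 2$ and $|u'| = |v'| = (n-1)+1$, the inductive hypothesis applied to $u'$ and $v'$ yields $u' = v'$, whence $u = v$, contradicting our assumption.

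The main obstacle will be establishing the sub-claim on first letters. Suppose $u[1] = a$ and $v[1] = b$ with $a \neq b$: both $a$ and $b$ appear in each of $u$ and $v$, and the multiplicities $|u|_c$ and $|v|_c$ must agree whenever they are less than $n$ (since $\sim_n$ detects multiplicities through the subwords $c^k$). The easy case is, say, $|u|_b = 1$: then $|v|_b = 1$ as well, so the unique $b$ of $v$ sits at position $1$; the length-$2$ subword $ab$ then lies in $u$ (using $u[1] = a$ and the unique $b$ of $u$, which is at some position $\geq 2$) but not in $v$ (no $b$ follows any $a$ in $v$), a contradiction. The symmetric case $|v|_a = 1$ is treated likewise. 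The remaining case, $|u|_b, |v|_a \geq 2$, is handled by a more delicate case analysis using length-$3$ subwords such as $aba$ or $aab$; alternatively, one may invoke Simon's first-occurrence characterization of $\sim_n$ to obtain the sub-claim directly.
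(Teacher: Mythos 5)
Your overall strategy---induction on $n$ via the reduction $u=au'$, $v=av'$ $\Rightarrow$ $u'\sim_{n-1}v'$---is sound and genuinely different from the paper's proof, which is a direct case analysis on block factorisations (the key case being a word with at least three blocks, handled with distinguishers of the form $a^i s_i a^{\ell-i}$). Your base case and your reduction step are both correct. But there is a genuine gap at exactly the point you defer: the first-letter sub-claim in the case $|u|_b,|v|_a\geq 2$ is not proved, and the repair you hint at does not work. Length-$3$ subwords such as $aba$ or $aab$ cannot settle this case: for example $u=ababab$ and $v=bababa$ are distinct words of length $6$ with different first letters, equal Parikh images, and \emph{exactly the same} sets of subwords of length at most $3$ (every word of length $\leq 3$ over $\{a,b\}$ embeds into both). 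So any argument confined to length-$3$ patterns fails here, and ``Simon's first-occurrence characterization'' is too vague an appeal to carry the remaining case.

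The gap is fillable, and in fact the whole sub-claim has a uniform short proof. Suppose $u[1]=a\neq b=v[1]$ and $u\sim_n v$ with $|u|=|v|=n+1$. Both letters occur in both words, so neither word is a power of a single letter; hence every letter multiplicity is at most $n$ and the multiplicities coincide, say $|u|_a=|v|_a=p\geq 1$ and $|u|_b=|v|_b=q\geq 1$ with $p+q\leq n+1$. If $p\geq 2$, take $w=ab^{q}$: it embeds in $u$ (the leading $a$ followed by all $q$ occurrences of $b$, which lie to its right) but not in $v$ (the first letter of $v$ is a $b$ preceding every $a$, so at most $q-1$ occurrences of $b$ follow any $a$), and $|w|=1+q\leq 1+(n+1-p)\leq n$. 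If $p=1$, the unique $a$ of $u$ is its first letter, so $ba\not\subword u$ while $ba\subword v$, and $|ba|=2\leq n$. (For $ababab$ versus $bababa$ this yields the length-$4$ distinguisher $abbb$.) With this inserted, your induction goes through.
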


We can thus introduce new variables $y_{i,j,k}$ and $y_{i,j,k,m}$ for
all $i,j,k,m\in\{1,\ldots,n\}$ (allowing repetitions of indexes) and
require $y_{i,j,k}=x_ix_jx_k$ and $y_{i,j,k,m}=x_ix_jx_kx_m$, up to
mirroring but with the same orientation for all the
$y_{i_1,\ldots,i_\ell}$'s. Then we complete the construction of $\psi$
with the following conjuncts:
\begin{align}
\land\;&
\textstyle{\bigand_{1\leq i,j,k\leq n} \textit{``formula defining $y_{i,j,k}$''}}
\tag{$\psi_{12}$}
\\
\land\;&
  \textstyle{\bigand_{1\leq i,j,k,m\leq n} \textit{``formula defining
      $y_{i,j,k,m}$''}}
\:.
\tag{$\psi_{13}$}
\end{align}

In order to require that, for example, $y_{1,5,2}=x_1x_5x_2$, it is
enough to:
\begin{itemize}
\item enumerate all words of length upto $2$, and for each say whether it is or is not a subword of $y_{1,5,2}$ ($y_{1,5} \subword y_{1,5,2} \land x_1^2 \not \subword y_{1,5,2} \land \ldots$),
\item and require that $y_{1,5,2}$ has length $3$, by saying that every subword of $y_{1,5,2}$ is itself or is one of the words of length upto $2$, and that $y_{1,5,2}$ is distinct from all these words.
\end{itemize}
The correctness of the construction is guaranteed by Lemma~\ref{lem-simnnp1}.

Once all 3-letter words have been defined, we can use them to define
4-letter words (and if needed, 5-letter words, and so on) simlarly, with correctness following from Lemma~\ref{lem-simnnp1}.

Finally, we let $\phi'_\calP$ be obtained from the formula $\phi_\calP$ ---see
Eq.~\eqref{eq-phiP} page~\pageref{eq-phiP}--- by replacing  every constant
letter $a_i\in A$  by the variable $x_i$, and every
constant word $a_{i_1} \ldots a_{i_\ell}\in A^*$  by the
variable $y_{i_1,\ldots,i_\ell}$ (we use $z$ for the
constant word $\epsilon$, and $x_i^2$ for the constant word $x_ix_i$).

Now we define $\psi_\calP$ with
\[
\psi_\calP\equiv \exists Z\:(\psi_1\land \cdots\land\psi_{13}\land \phi'_\calP)
\]
where $Z=\{z,x_1,\ldots,x_n,x'_1,\ldots,x'_n, x_1^2, x_1^3,x_1^4,\ldots, y_{1,1},\ldots,
y_{i_1,\ldots,i_\ell},\ldots\}$ collects all the free variables we
used in $\psi_1\land\cdots\land\psi_{13}$.

Noting that each $\psi_i$ as well as $\phi'_\calP$ is a $\Sigma_2$ formula, we get that the resulting $\psi_\calP$ is a $\Sigma_2$ formula in the pure logic
that is true in $(A^*,\subword)$ iff the PCP instance $\calP$ is
positive.
This concludes the proof of Theorem~\ref{thm-undec-pure}.

\subsection{Undecidability for a fixed alphabet}

The above Theorem~\ref{thm-undec-pure} applies to the truth problem
\emph{for unbounded alphabet}, i.e., where we ask whether
$\models_{A^*}\phi$ for given $A$ and $\phi$. In this proof, the
alphabet $A$ depends on the PCP instance $\calP$ since it includes
symbols for the states of the regular automata that define the
languages $(1u_1+\cdots+n u_n)^+$ and $(1v_1+\cdots+n v_n)^+$ in
Eq.~\eqref{eq-phiP}, and further includes symbols in
$N=\{1,\ldots,n\}$.

It is possible to further show undecidability of the $\Sigma_2$
fragment even \emph{for a fixed alphabet $A$} as we now explain.
For this we consider a variant of Post's Correspondence Problem:
\begin{definition}
The \emph{variant PCP problem} asks, given an alphabet
$\Gamma$, pairs $(u_1,v_1),\ldots,\linebreak[0] (u_n, v_n)$ over $\Gamma$, and an
extra word $w\in\Gamma^*$, whether there exists a sequence
$i_1,\ldots,i_\ell$ over $\{1,\ldots,n\}$ such that
$w\, u_{i_1}\ldots u_{i_\ell} = v_{i_1}\ldots v_{i_\ell}$.
\end{definition}
\begin{lemma}
There is a fixed $\Gamma$ and a fixed sequence of pairs over $\Gamma$
for which the variant PCP problem (with only $w$ as input) is
undecidable.
\end{lemma}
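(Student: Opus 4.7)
The plan is to reduce from the halting problem for a fixed universal Turing machine $U$, using the standard Post-style reduction of TM halting to the Modified Post Correspondence Problem (MPCP). The key observation is that in this construction, only the mandatory first pair---the start pair $(u_1,v_1) = (\#, \# q_0 x \#)$ encoding the initial configuration---depends on the TM input $x$; all remaining pairs $(u_j,v_j)$ for $j \geq 2$ (tape-symbol copies, transition pairs encoding $U$'s finite control, tape-extension and end-game pairs) depend only on $U$ and can be fixed once and for all.

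Concretely, I would fix $\Gamma$ to be the union of $U$'s tape alphabet, its state set, and the separator $\#$, and fix the sequence of pairs of the variant PCP instance to be the \emph{swapped} tail $(u'_j, v'_j) \egdef (v_j, u_j)$ for each $j \geq 2$. Given any TM input $x$, take the variant-PCP input to be $w \egdef q_0 x \#$.

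The correctness is an algebraic identity: a sequence $j_1, \ldots, j_\ell$ solves the variant PCP instance iff the sequence $1, j_1, \ldots, j_\ell$ solves the MPCP instance. Indeed, the variant condition $w \cdot u'_{j_1} \cdots u'_{j_\ell} = v'_{j_1} \cdots v'_{j_\ell}$ unfolds to $q_0 x \# \cdot v_{j_1} \cdots v_{j_\ell} = u_{j_1} \cdots u_{j_\ell}$, and prepending $\#$ on both sides yields $\# q_0 x \# \cdot v_{j_1} \cdots = \# \cdot u_{j_1} \cdots$, which is exactly the MPCP equation $v_1 v_{j_1} \cdots = u_1 u_{j_1} \cdots$ with pair 1 used first. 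Since MPCP solvability is equivalent to $U$ halting on $x$, and halting of a universal $U$ is undecidable, the variant PCP with fixed $\Gamma$ and fixed pairs (and only $w$ varying) is undecidable.

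The only subtlety to verify---not really an obstacle---is that the absence of a forced first pair in the variant PCP does not produce spurious solutions unrelated to MPCP runs. The algebraic equivalence above is in fact a bijection between solutions of the two problems, so no additional MPCP-to-PCP gadgetry (extra markers, doubled alphabet, etc.) is needed: the free word $w$ already plays the role of the MPCP start pair.
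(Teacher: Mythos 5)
Your proposal is correct and follows essentially the same route as the paper, which only sketches the argument: reduce from the halting/acceptance problem of a fixed universal Turing machine via the standard MPCP construction, observing that only the start pair depends on the input, so that the input can be absorbed into the free prefix $w$ while all other pairs stay fixed. Your write-up merely supplies the details the paper leaves implicit (the swap of the pair components to match the orientation of the variant-PCP equation, and the check that dropping the forced first pair creates no spurious solutions).
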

\begin{proof}[Proof Sketch]
One adapts the standard undecidability proof for PCP. Instead of
reducing from the question whether a given TM halts, one reduces from
the question whether a fixed TM accepts a given input. Note that in the
case of a universal TM, the problem is undecidable. Fixing the TM will
lead to a fixed sequence of pairs $(u_1,v_1),\ldots,(u_n,v_n)$, and
the input of the TM will provide the $w$ parameter of the problem.
\end{proof}

\begin{theorem}[Undecidability for fixed alphabet]
There exists a fixed alphabet $A$ such that the truth problem for the pure logic
$\FO(A^*,\subword)$ is undecidable even when restricted to formulae in
$\Sigma_2$.
\end{theorem}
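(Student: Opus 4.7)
The plan is to plug the variant PCP problem of the preceding lemma into the pure-logic construction of Theorem~\ref{thm-undec-pure}. Fixing once and for all a $\Gamma$ and a sequence $(u_1,v_1),\ldots,(u_n,v_n)$ of pairs for which the variant PCP is undecidable also fixes every ingredient of the alphabet $A=\Gamma\cup N\cup Q$ used by the reductions of Theorems~\ref{thm-undec} and~\ref{thm-undec-pure}: here $N=\{1,\ldots,n\}$ and $Q$ is the (fixed) state set of NFAs accepting the two regular languages of Eq.~\eqref{eq-phiP}. Only the input word $w\in\Gamma^*$ varies. The goal is to map $w$ computably to a $\Sigma_2$ pure-logic sentence $\psi_w$ over the fixed $A$ whose truth in $(A^*,\subword)$ is equivalent to positivity of the variant PCP instance $(w,(u_i,v_i)_i)$.

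The sentence $\psi_w$ has the shape $\exists \vec Z\,(\psi_{\text{setup}}(\vec Z)\wedge \phi'_w(\vec Z))$. The subformula $\psi_{\text{setup}}$ extends the cascade $\psi_1\wedge\cdots\wedge\psi_{13}$ of Theorem~\ref{thm-undec-pure} with a free variable $y_w$ forced to equal $w$, in addition to all (fixed) short constants that were already handled there. To define $y_w$ we iterate the idea behind $\psi_{12},\psi_{13}$: by Lemma~\ref{lem-simnnp1}, any word of length $k\geq 3$ is uniquely determined (up to the global mirroring of Lemma~\ref{lem-mirroring}) by its set of length-$(k-1)$ subwords, so once free variables for all length-$<k$ words over $A$ have been introduced, the defining formula for a fresh length-$k$ word is a $\Sigma_2$ conjunction over these earlier variables, exactly as in $\psi_{12}$ and $\psi_{13}$. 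Iterating up to $k=|w|$ yields a $\psi_{\text{setup}}$ whose size grows with $|w|$ (exponentially in the worst case) but whose alphabet remains $A$, which is all that is required for a computable reduction.

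The subformula $\phi'_w$ encodes the variant PCP equation $y_w\cdot\pi_\Gamma(x) = \pi_\Gamma(x')$ within the template of Eq.~\eqref{eq-phiP}. The chief obstacle is to express this $w$-dependent condition without enlarging $A$ with fresh states for an NFA of the $w$-dependent language $y_w\cdot L_1$. We sidestep this by keeping the fixed NFAs for $L_1$ and $L_2$ intact and instead introducing an auxiliary witness $\hat x\in A^*$ that plays the role of the concatenation $y_w\cdot x$: using a single marker symbol from $Q$ to delimit a ``$y_w$-prefix region'' from an accepting run of the fixed NFA $\calA_1$, one asserts (via P3--P4 applied to the defined constant $y_w$) that the $\Gamma$-projection of the prefix region is $y_w$ and (via the run-encoding from Lemma~\ref{lem-express-regular} over the fixed $Q$) that the suffix is a valid $L_1$-run; the variant equation then reduces to $\pi_\Gamma(\hat x)=\pi_\Gamma(x')$ together with $\pi_N(x)=\pi_N(x')$. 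Each conjunct is a $\Sigma_2\cap\FO^3$ basic-logic formula over $A$ using $y_w$ as a constant, so after purification via $\psi_{\text{setup}}$ the full $\psi_w$ lies in $\Sigma_2$ of the pure logic over the fixed $A$, establishing the theorem.
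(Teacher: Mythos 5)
Your overall strategy matches the paper's: fix $\Gamma$ and the sequence of pairs via the variant-PCP lemma so that only $w$ varies, define $w$ as a long constant by iterating the $\sim_n$-based cascade of $\psi_{12},\psi_{13}$, and plug everything into the $\Sigma_2$ template of Eq.~\eqref{eq-phiP}. The gap is in how you handle the $w$-prefix over a fixed alphabet. You introduce an auxiliary witness $\hat x$ in which a marker from $Q$ separates a ``$y_w$-prefix region'' from the run, and you claim that P3--P4 let you assert that ``the $\Gamma$-projection of the prefix region is $y_w$''. But P3--P4 only express projections of an \emph{entire} word onto a subset of the alphabet; the logic has no concatenation, no factor extraction, and no way to restrict a projection to ``the part of $\hat x$ occurring before the marker''. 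Since the prefix $w$ and the blocks $u_{i_1}\cdots u_{i_\ell}$ are written with the same letters of $\Gamma$, $\pi_\Gamma(\hat x)$ inseparably mixes them, and a condition like ``$z$ followed by the marker is a subword of $\hat x$'' is not even a well-formed atom (it concatenates a variable with a constant). So the key conjunct of your $\phi'_w$ is not expressible with the tools you invoke.

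The paper resolves exactly this difficulty by enlarging the fixed alphabet with a disjoint renamed copy $\Gamma'=\{\hat a,\hat b,\ldots\}$ of $\Gamma$: the prefix is written as $\hat w\in\Gamma'{}^*$, so that $\pi_{\Gamma'}(x)=\hat w$ \emph{is} a P3-style projection of the whole word; the language for $x$ becomes the fixed regular language $\Gamma'{}^*\cdot(1u_1+\cdots+nu_n)^+$; and the letter-matching condition becomes $\pi_{\Gamma\cup\Gamma'}(x)=\pi_{\Gamma\cup\Gamma'}(x')$ after closing the $v$-side language under the transduction $\rho$ that lets each letter of $\Gamma$ appear either plain or hatted. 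Your proposal omits $\Gamma'$ and $\rho$ (your fixed alphabet is $\Gamma\cup N\cup Q$) and offers no working substitute, so as written the reduction does not go through. The remaining ingredients of your argument --- fixing the instance, defining $y_w$ (resp.\ $\hat w$) as a constant of unbounded length, accepting the resulting blowup in formula size --- are in line with the paper and would be fine once the prefix-extraction step is repaired along the paper's lines.
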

\begin{proof}[Proof Sketch]
We adapt the proof of Theorems~\ref{thm-undec}
and~\ref{thm-undec-pure} by reducing from the variant PCP problem with
fixed $\Gamma$ and sequence of pairs. The encoding formula can be
\begin{equation}
\label{eq-phiP'}
 \equiv \exists x, x'
\biggl( \begin{array}{l}
        \;\;\;\; x \in \Gamma'{}^*\cdot(1u_1 + \cdots + nu_n)^+ \ \land \ x' \in \rho(1v_1 + \cdots + nv_n)^+
\\[.3em]
        \land\ \pi_{\Gamma'}(x)=\hat{w}\ \land \ \pi_N(x) = \pi_N(x') \land \pi_{\Gamma\cup\Gamma'}(x) = \pi_{\Gamma\cup\Gamma'}(x')
\end{array} \biggr)
\end{equation}
to be compared with Eq.~\eqref{eq-phiP}. Here we use
$\Gamma'=\{\hat{a},\hat{b},..\}$, a renamed copy of
$\Gamma=\{a,b,..\}$, to be able to extract the $w$ prefix in $x$.
The word $\hat{w}$ is simply $w$ from the variant PCP instance
with all letters from $\Gamma$ replaced by corresponding letters
from $\Gamma'$. We
then need to extend the language $(1v_1+\cdots + nv_n)$ for $x'$ so
that letters from $\Gamma'$ can be used in place of the corresponding
letters from $\Gamma$. This is done by applying a simple transduction
$\rho\egdef \Bigl(\bigcup_{a\in\Gamma} \bigl[{a\atop
    a}\bigr]\cup\bigl[{\hat{a}\atop a}\bigr]\Bigr)^*$.

In the end, we only use two fixed regular languages, and thus a fixed
alphabet $A$. Note however that encoding the input $w$ will require
using constant words of unbounded lengths. Here we rely on the fact
that our reduction from basic to pure logic can define constant words
of arbitrary length in the $\Sigma_2$ fragment.
\end{proof}

\makeatletter{}\section{Decidability for the $\FO^2$ fragment}
\label{sec-FO2}

In this section we show that for finite alphabets, the truth problem for the 2-variable
fragment $\FO^2(A^*,\subword)$ is decidable. The proof was first
sketched by Kuske~\cite{kuske-private}.

\subsection{Rational relations}
\label{ssec-rat-rel}

We recall the basics of rational relations. See \cite[Chap.\ 3]{berstel79} or
\cite[Chap.\  4]{sakarovitch2009} for more details.

For finite alphabets $A$ and $B$, the \emph{rational relations}
between $A^*$ and $B^*$ are defined as the subsets of $A^* \times B^*$
recognised by asynchronous transducers. The set of rational relations
between $A^*$ and $B^*$ is exactly the closure of the finite subsets
of $A^* \times B^*$ under union, concatenation, and Kleene star.

For example, it is easy to see that the subword relation, seen as a
subset of $A^*\times A^*$ is a rational relation~\cite[Example
  III.5.9]{berstel79}, and that the strict subword relation is
rational too:\footnote{When writing such regular expressions we use
  the vector notation $\bigl[{y\atop x}\bigr]$ to denote $(x,y)$. Note
  that the domain and the range of the relation correspond to the
  bottom and, resp., the top, lines of the vectors. We use $\cdot$ to mean concatenation. }
\begin{xalignat*}{2}
\subword        \;&= \left(\bigcup_{a\in A}\pair{\epsilon}{a}\cup\pair{a}{a}\right)^*
\:,
&
\strictsubword \;&=\;
\subword \cdot \left(\bigcup_{a\in A}\pair{\epsilon}{a}\right)\cdot \subword
\:.
\end{xalignat*}
Define now the \emph{incomparability relation} over $A^*$, denoted
$\perp$, by $u \perp v$ iff $u \not \subword v\land v \not \subword
u$.
\begin{lemma}
The incomparability relation over $A^*$ is a rational relation.
\end{lemma}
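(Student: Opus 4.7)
The main difficulty is that although $\perp$ equals $\{(u,v) : u \not\subword v\} \cap \{(u,v) : v \not\subword u\}$, rational relations are closed under neither intersection nor complementation, so the rationality of $\perp$ cannot be inferred from that of $\subword$ by boolean operations alone. My plan is to directly exhibit a rational expression (equivalently, an asynchronous two-tape transducer) for $\perp$.

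As a warm-up I would first establish that $\not\subword$ is itself a rational relation. The key combinatorial observation is that $u \not\subword v$ iff there exist factorisations $u = u_1 a u_2$ and $v = v_1 v_2$ such that $u_1$ greedily (leftmost) embeds into $v_1$ consuming exactly all of $v_1$, and the letter $a$ does not appear in $v_2$. The ``greedy with full consumption'' relation is recognised by a small asynchronous transducer whose states track the $u$-letter currently being sought in $v$; the post-failure tail gives recognisable constraints on $u$ and on $v$ separately. Putting these pieces together yields a rational expression for $\not\subword$, and $\not\supword$ is handled symmetrically.

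For $\perp$ itself I would combine a failure witness for $u \not\subword v$ (with failing letter $a$, giving cut points at $|u_1|$ in $u$ and $|v_1|$ in $v$) with one for $v \not\subword u$ (with failing letter $b$, giving cut points $|u_3|$ in $u$ and $|v_3|$ in $v$). There are only finitely many orderings of these cut points on $u$ and on $v$, and I would case-split on this ordering. In each case $u$ and $v$ decompose into a bounded number of aligned segments, and the constraints on those segments (one must greedily embed into another, another must avoid the fixed letter $a$ or $b$, and so on) are each rational or recognisable; concatenating them within a case and taking the finite union over cases and over $(a,b) \in A \times A$ yields a rational expression for $\perp$. The main obstacle is the bookkeeping in this case analysis and verifying that the greedy/leftmost property is correctly preserved through the joint decomposition; this is tedious but routine once one fixes systematic names for the cut points.
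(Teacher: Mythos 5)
Your warm-up step is sound: the greedy/leftmost-failure characterisation of $\not\subword$ is correct, and it is essentially the same device the paper uses. The gap is in the combination step. Intersecting the two failure witnesses is exactly the non-rational-looking intersection you set out to avoid, and the case analysis on cut points does not remove it. Concretely, take the case where the cut point of the $u \not\subword v$ witness precedes that of the $v \not\subword u$ witness on both tapes, so $u = A_1A_2A_3$ and $v = B_1B_2B_3$ with the witnesses requiring (i) $A_1$ to greedily embed into $v$ consuming exactly $B_1$, and (ii) $B_1B_2$ to greedily embed into $u$ consuming exactly $A_1A_2$. Constraint (ii) does not factor as a constraint on $(A_1,B_1)$ concatenated with one on $(A_2,B_2)$: the position in $u$ reached by greedily embedding $B_1$ into $u$ has nothing to do with $|A_1|$ and can drift from it by an unbounded amount (e.g.\ with $u$ starting $(ab)^n\cdots$ and $v$ starting $(abb)^n\cdots$, the two greedy embeddings pair a given prefix of $v$ with prefixes of $u$ whose lengths differ by $\Theta(n)$). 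So on the shared region the two embedding transducers consume the tapes at unboundedly divergent relative rates, which is precisely the obstruction to intersecting rational relations; your segments are aligned in position, but the constraints on them are not segment-local, so ``concatenating them within a case'' is not available, and adding further cut points only spawns new overlapping constraints.

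The paper sidesteps the intersection altogether with a small but essential observation: since $|u|>|v|$ already forces $u\not\subword v$, one has $u\perp v$ iff $(u\not\subword v\land |u|\leq|v|)$ or $(v\not\subword u\land |v|\leq|u|)$, a \emph{union} of two relations, each needing only one greedy-failure witness together with a length comparison. The length comparison is then folded into that single witness by demanding that, after the failure point, the two remaining suffixes have equal length and differ in their first letter (Lemma~\ref{lem-T1-rat-aux}), which one asynchronous transducer checks by reading the suffixes in lockstep. This disjunction is the missing idea; without it, or some other argument for why your two witnesses can be synchronised, the proof does not go through.
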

\begin{proof}
We cannot simply use the fact that $\not\subword$ and $\not\supword$
are rational
relations since rational relations are not closed under intersection.
The way out is to express incomparability as a \emph{union}
$\perp\ =T_1\cup T_2$ of rational relations, using the following
equivalence
\begin{align}
u\perp v
\;\text{ iff }\;
\overset{(u,v)\in T_1}{\overbrace{(u\not\subword v\land |u|\leq |v|)}}
\lor
\overset{(u,v)\in T_2}{\overbrace{(v\not\subword u\land |v|\leq |u|)}}
\:.
\end{align}
The equivalence holds since $|u|>|v|$ implies
$u\not\subword v$.

We show (see Coro.~\ref{coro-T1-rat}) that $T_1$ is rational. A
symmetric reasoning shows that $T_2$ is rational. This concludes since
the union of two rational relations is rational.
\end{proof}
In the following proof, we write $w(0:-i]$ to denote the prefix of
  length $|w|-i$ of an arbitrary word $w$ (assuming $0\leq i\leq|w|)$.
\begin{lemma}
\label{lem-T1-rat-aux}
$(u,v)\in T_1$ iff there exists an integer $\ell$, a factorisation
  $u=a_1a_2\ldots a_\ell a u'$ of $u$, and a factorisation $v=v_1 a_1
  v_2 a_2 \ldots v_\ell a_\ell b v'$ of $v$ such that\\
---
  $a_1,\ldots,a_\ell\in A$ and $v_1,\ldots,v_\ell\in A^*$ are such that $a_i$ does not occur in $v_i$  for
all  $i=1,\ldots,\ell$,
\\
--- $a,b\in A$ are two
  letters with $a\neq b$, and
\\
--- $u',v'\in A^*$ are two suffixes with
  $|u'|=|v'|$.
\end{lemma}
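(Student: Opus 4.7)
I plan to prove both directions of the biconditional using greedy (leftmost) matching of $u$ into $v$ as the key tool.

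For the \emph{if} direction (the factorization implies $(u,v)\in T_1$), I compute lengths directly: with $|u'|=|v'|$, summing gives $|v|-|u|=\sum_{i=1}^\ell|v_i|\geq 0$, so $|u|\leq|v|$. For $u\not\subword v$, I argue by contradiction: if $u\subword v$, then the leftmost embedding of $u$ into $v$ succeeds, since greedy matching is complete for $\subword$. The constraint $a_i\notin v_i$ pins the leftmost occurrence of each $a_i$ in $v$ (past the previous match) to the one explicitly written in the factorization, so the prefix $a_1\cdots a_\ell$ of $u$ is greedily embedded into $v_1a_1\cdots v_\ell a_\ell$, and $au'$ must embed into the remaining suffix $bv'$. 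But $|au'|=1+|u'|=1+|v'|=|bv'|$, so the embedding would force $au'=bv'$ as words, contradicting $a\neq b$.

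For the \emph{only if} direction (pairs in $T_1$ admit such a factorization), I would construct the factorization using greedy matching. Running left-to-right greedy matching of $u$ into $v$, let $q_i$ denote the position in $v$ of the $i$-th greedy match with $q_0=0$; since $u\not\subword v$, the process fails at some point, and let $k$ be the number of letters successfully matched. Then $u=a_1\cdots a_k\,a\,u''$ with $a=u[k+1]$, and $v=v_1^g a_1\cdots v_k^g a_k\, w$, where $v_i^g$ is the greedy gap between consecutive matches (so $a_i\notin v_i^g$) and $a\notin w$. The length condition $|u|\leq|v|$ yields $|w|\geq |u|-k\geq 1$. To force $|u'|=|v'|$, I examine the non-decreasing step-function $f(\ell)=q_\ell-\ell$ for $\ell\in\{0,\ldots,k\}$ and locate an $\ell$ with $f(\ell)=d$, where $d=|v|-|u|$; taking the largest such $\ell$ ensures that the following letter in $v$ differs from $a=u[\ell+1]$, which I then choose as $b$, with $v'$ the remainder, yielding a valid factorization.

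The main obstacle is the precise choice of $\ell$ in the backward direction: the function $f$ is non-decreasing with integer jumps, so it may skip over the value $d$ in a single step. Handling this edge case requires splitting a long greedy gap $v_{\ell+1}^g$---pulling out a short portion to serve as a supplementary $v_i$-block while preserving the condition $a_i\notin v_i$---so that a balanced factorization with $|u'|=|v'|$ indeed exists. This combinatorial bookkeeping, reconciling the length balance with the letter-avoidance conditions and with $a\neq b$, is the most delicate part of the argument.
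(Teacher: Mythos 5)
Your \emph{if} direction is correct and complete: the length count gives $|u|\leq|v|$, and the leftmost-embedding argument (the condition $a_i\notin v_i$ pins the leftmost match of each $a_i$ to the designated position, so $au'$ would have to embed into $bv'$ of equal length, forcing $a=b$) gives $u\not\subword v$. The paper dismisses this direction as ``clear'', so here you supply more detail than it does.

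The obstacle you flag in the \emph{only if} direction is genuine, and no amount of bookkeeping will close it, because the statement as written is actually too restrictive. Take $u=ab$ and $v=bba$: then $u\not\subword v$ and $|u|\leq|v|$, so $(u,v)\in T_1$, yet no factorisation of the stated shape exists --- with $\ell=0$ we get $|v|=1+|v'|=1+|u'|=|u|=2\neq 3$, and with $\ell=1$ we need $v=v_1a_1b\,v'$ with $|v'|=0$, forcing the second letter of $v$ to be $a_1=a$, which it is not. This is precisely your ``$f$ skips over $d$'' situation ($f(0)=0$, $f(1)=2$, $d=1$), and your proposed repair of splitting a greedy gap into a supplementary $v_i$-block cannot work: a new block would need a new letter of $u$ to pair with, and none is available. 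The correct repair is to the \emph{statement}: one must allow an extra block $w$ of letters all different from $a$ between $a_\ell$ and $b$, i.e.\ $v=v_1a_1\cdots v_\ell a_\ell\,w\,b\,v'$ with $a$ not occurring in $w$ (and correspondingly a factor $\bigl[\bigcup_{c\neq a}\pair{\epsilon}{c}\bigr]^*$ must be inserted before the middle union in the expression of Corollary~\ref{coro-T1-rat}; the relation remains rational and the corollary survives). With that amendment, the paper's route --- call $i$ \emph{good} when $u(0:-i]\subword v(0:-i]$, let $m>0$ be the smallest good index, and set $\ell=n-m$ --- sidesteps your difficulty entirely: the choice of $m$ makes $|v'|=m-1=|u'|$ hold by construction, the leftmost embedding of $a_1\cdots a_\ell$ into $v(0:-m]$ supplies the blocks $v_i$, and badness of $m-1$ says exactly that $a=a_{\ell+1}$ does not occur in $v$ between the last matched position and position $|v|-m+1$, which yields $w$ and a letter $b\neq a$. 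Your greedy-matching analysis becomes essentially equivalent to this once the extra block is permitted, so the right conclusion from your ``delicate part'' is not that a cleverer choice of $\ell$ is needed, but that the target factorisation must be relaxed.
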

\begin{proof}
The $(\Longleftarrow)$ direction is clear: the listed conditions
guarantee $|u|\leq|v|$ and $u\not\subword v$.

To see the $(\implies)$ direction, we assume $(u,v)\in T_1$ and write
$u=a_1\ldots a_n$, with $n=|u|$, knowing that $n>0$ since
$u\not\subword v$. We say that $i\in\{0,\ldots,n\}$ is \emph{good} if
$u(0:-i]\subword v(0:-i]$, and \emph{bad} otherwise. Clearly, $n$ is
    good and $0$ is bad. Let $m>0$ be the smallest good index: it is
    easy to check that taking $\ell=n-m$, $a=a_{\ell+1}$ and
    $u'=a_{\ell+2}\ldots a_n$ proves the claim.
\end{proof}
\begin{corollary}
\label{coro-T1-rat}
$T_1$ is a rational relation.
\end{corollary}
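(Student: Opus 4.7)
The plan is to convert the combinatorial characterisation supplied by Lemma~\ref{lem-T1-rat-aux} directly into a rational expression for $T_1$. Since the rational relations between $A^*$ and $A^*$ are exactly the closure of the finite subsets of $A^*\times A^*$ under union, concatenation, and Kleene star, it suffices to exhibit $T_1$ as such an expression.

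I would organise the expression around three building blocks, each mirroring one ingredient of the factorisation $u=a_1\ldots a_\ell a u'$, $v=v_1 a_1 v_2 a_2\ldots v_\ell a_\ell b v'$ provided by the lemma. For each $c\in A$, a \emph{matched-letter} block
\[
R_c \;\egdef\; \left(\bigcup_{d\in A\setminus\{c\}}\pair{\epsilon}{d}\right)^{*}\cdot\pair{c}{c}
\]
produces exactly the pairs $(c, w c)$ with $w\in(A\setminus\{c\})^*$, which captures the requirement that $a_i$ not occur in $v_i$. A \emph{mismatch} block $M\egdef\bigcup_{a\neq b}\pair{a}{b}$ accounts for the pair $(a,b)$ with $a\neq b$, and an \emph{equal-length tail} block $E\egdef\bigl(\bigcup_{c,d\in A}\pair{c}{d}\bigr)^{*}$ accounts for the pair $(u',v')$ with $|u'|=|v'|$. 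My candidate expression is then
\[
T_1 \;=\; \left(\bigcup_{c\in A} R_c\right)^{*}\cdot M\cdot E\:.
\]

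To conclude, I would verify both inclusions. The inclusion $\supseteq$ is immediate by construction: any pair matched by the right-hand side obviously satisfies the conditions of Lemma~\ref{lem-T1-rat-aux} and hence lies in $T_1$. For $\subseteq$, I apply Lemma~\ref{lem-T1-rat-aux} to an arbitrary $(u,v)\in T_1$ to obtain a factorisation of the prescribed form, then parse it block by block: each $(a_i, v_i a_i)$ is read as an instance of $R_{a_i}$, then $(a,b)$ as an instance of $M$, and finally $(u',v')$ as an instance of $E$. Since every building block is built from finite subsets of $A^*\times A^*$ using union, concatenation and Kleene star, so is the whole expression, which establishes the rationality of $T_1$. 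I do not anticipate a real obstacle here: the combinatorial content of the argument has already been isolated in Lemma~\ref{lem-T1-rat-aux}, and the remaining work is the routine bookkeeping of matching the pieces of the rational expression to the pieces of the factorisation.
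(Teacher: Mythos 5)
Your proposal is correct and is essentially identical to the paper's proof: your expression $\bigl(\bigcup_{c}R_c\bigr)^*\cdot M\cdot E$ is exactly the rational expression the paper writes down, obtained by the same direct translation of Lemma~\ref{lem-T1-rat-aux}. No further comment is needed.
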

\begin{proof}
Lemma~\ref{lem-T1-rat-aux} directly translates as
\[
T_1=\Biggl(\bigcup_{a\in A}\Biggl[\bigcup_{b\neq
    a}\pair{\epsilon}{b}\Biggr]^*\cdot\pair{a}{a}\Biggr)^*
\cdot\Biggl(\bigcup_a\bigcup_{b\neq a}\pair{a}{b}\Biggr)
\cdot\Biggl(\bigcup_{a,a'}\pair{a}{a'}\Biggr)^*
\:.\]
\end{proof}

\subsection{Decidability for $\FO^2$}
\label{ssec-FO2-algo}

Let $\calR \egdef \{=,\ssubword,\ssupword,\perp\}$ consists of the following
four relations on $A^*$: equality, strict subword relation, its
inverse, and incomparability. These four relations form a partition of
$A^*\times A^*$, i.e., for all $u,v \in A^*$, exactly one of $u = v$,
$u \ssubword v$, $u \ssupword v$, and $u \perp v$ holds.

For any $R \in \calR$ and language $L \subseteq A^*$, we define the
\emph{preimage} of $L$ by $R$, denoted $R^{-1}(L)$, as being the language $\{
x \in A^* : \exists y \in L : (x,y)\in R\}$. We saw in
section~\ref{ssec-rat-rel} that each relation $R \in \calR$ is
rational: we deduce that $R^{-1}(L)$ is regular whenever $L$ is.
Furthermore, using standard automata-theoretic techniques, a
description of the preimage $R^{-1}(L)$ can be computed effectively
from a description of $L$.

In the following we consider $\FO^2$ formulae using only $x$ and $y$
as variables. We allow formulae to have regular predicates of the form
$x \in L$ for fixed regular languages $L$ (i.e., we consider the
extended logic). Furthermore, we consider a variant of the logic
where we use the binary relations $\ssubword$, $=$ and $\perp$ instead of
$\subword$. This will be convenient later. The two variants are
equivalent, even when restricting to $\FO^m$ or $\Sigma_m$ fragments: in one direction we observe that $x \subword y$ can be
defined with $x \ssubword y \lor x = y$, in the other direction
one defines $x \ssubword y$ with $x \subword y \land y\not\subword x$ and
$x\perp y$ with $x\not\subword y\land y\not\subword x$. We also use $x
\ssupword y$ as shorthand for $y \ssubword x$.

\begin{lemma}
\label{lem-fo2-reg}
Let $\phi(x)$ be an $\FO^2$ formula with at most one free variable.
Then there exists a regular language $L_\phi\subseteq A^*$ such that
$\phi(x)$ is equivalent to $x\in L_\phi$. Furthermore, a description
for $L_\phi$ can be computed effectively from $\phi$.
\end{lemma}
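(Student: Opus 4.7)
The plan is to prove by structural induction a strengthened statement that covers formulas with up to two free variables: every $\FO^2$ formula $\psi$ with free variables among $\{x,y\}$ is effectively equivalent to a finite Boolean combination of \emph{atoms} of three shapes, namely $x\in L$ and $y\in M$ for effectively regular $L,M\subseteq A^*$, and $xRy$ for $R\in\calR$. When $\psi$ has at most one free variable (the situation of the lemma), atoms of the irrelevant shape either do not occur or become constant truth values, so the conclusion specialises to $x\in L_\psi$ for some effectively regular $L_\psi$.

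The atomic and Boolean cases are immediate: the three atomic shapes are already in the required form (degeneracies like $x\ssubword x\equiv\bot$ or $x=x\equiv\top$ are handled trivially), and the class of Boolean combinations is manifestly closed under $\land,\lor,\neg$, with regular languages also closed under these operations.

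The crucial step is the existential quantifier case (universals follow by duality $\forall z\,\chi\equiv\neg\exists z\,\neg\chi$). Given $\chi(x,y)$ already in the normal form by induction, I would use the fact that $\calR$ \emph{partitions} $A^*\times A^*$ to rewrite
\[
\chi(x,y)\;\equiv\;\bigvee_{R\in\calR}\bigl(xRy\,\land\,\chi_R(x,y)\bigr),
\]
where $\chi_R$ is obtained from $\chi$ by substituting $\top$ for $xRy$ and $\bot$ for $xR'y$ whenever $R'\neq R$. Each $\chi_R$ now only mentions atoms of the forms $x\in L$ and $y\in M$, and so can be put in DNF as $\chi_R\equiv\bigvee_k\bigl(L_{R,k}(x)\land M_{R,k}(y)\bigr)$ with $L_{R,k},M_{R,k}$ effectively regular. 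Pushing the quantifier inside then yields
\[
\exists y\,\chi\;\equiv\;\bigvee_{R,k}\Bigl(L_{R,k}(x)\,\land\,\exists y\bigl[xRy\land y\in M_{R,k}\bigr]\Bigr),
\]
and the inner subformula is precisely $x\in R^{-1}(M_{R,k})$, which is regular and effectively computable by the results of Section~\ref{ssec-rat-rel}. The outcome is therefore a finite Boolean combination of atoms of the single form $x\in L'$, as required; a symmetric argument handles $\exists x\,\chi$.

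Effectiveness follows at each step from the effective constructions for complement, union, intersection, DNF conversion, and computation of $R^{-1}(L)$ via the rational transducer for $R$. The one conceptual point is recognising that the partition property of $\calR$ is exactly what lets us push the quantifier through a Boolean combination that mixes regular predicates and rational-relation atoms: without it, one would be stuck because the class of rational relations is not closed under Boolean operations, but the split on $R$ reduces every projection to a single preimage of a regular language.
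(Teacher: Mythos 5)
Your proof is correct and follows essentially the same route as the paper's: structural induction, with the quantifier step reduced to preimages $R^{-1}(L)$ of regular languages under the rational relations in $\calR$, and the partition property of $\calR$ doing the work of eliminating negated relational atoms so that each surviving conjunct carries at most one relational atom. The only difference is organizational --- the paper substitutes inner quantified subformulae via the induction hypothesis, eliminates negations, and passes to DNF, whereas you carry a strengthened two-free-variable normal form and case-split upfront on which $R\in\calR$ holds --- a slightly cleaner packaging of the same argument.
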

\begin{proof}
By structural induction on $\phi(x)$. If $\phi(x)$ is an atomic formula of the form $x \in L$, the result is
immediate. If $\phi(x)$ is an  atomic
formula that uses a binary predicate $R$ from $\calR$, the fact that
it has only one
free variable means that $\phi(x)$ is a trivial $x=x$, or $x\ssubword x$, or
\ldots, so that $L_\phi$ is $A^*$ or $\emptyset$.

For compound formulae of the form $\neg \phi'(x)$ or $\phi_1(x) \lor
\phi_2(x)$, we use the induction hypothesis and the fact that regular
languages are closed under boolean operations.

There remains the case where $\phi(x)$ has the form $\exists y\:
\phi'(x,y)$. We first replace any subformulae of $\phi'$ having
the form $\exists x\: \psi(x,y)$ or $\exists y\: \psi(x,y)$ with
equivalent formulae of the form $y \in L_\psi$ or $x \in L_\psi$
respectively, for appropriate languages $L_\psi$, using the induction
hypothesis. Thus we may assume that $\phi'$ is quantifier-free. We now
rewrite $\phi'$ by pushing all negations inside with the following
meaning-preserving transformations:
\begin{xalignat*}{3}
\neg \neg \psi &\to \psi
&
\neg (\psi_1 \lor \psi_2) &\to \neg \psi_1 \land \neg \psi_2
&
\neg (\psi_1 \land \psi_2) &\to \neg \psi_1 \lor \neg \psi_2
\end{xalignat*}
and then eliminating negations completely with:
\begin{xalignat*}{2}
\neg (z \in L) &\to z \in (A^* \setminus L)
&
\neg (z_1\ R_1\ z_2) &\to z_1\ R_2\ z_2 \lor z_1\ R_3\ z_2 \lor z_1\ R_4\ z_2
\end{xalignat*}
where $R_1, R_2, R_3, R_4$ are relations such that $\calR = \{R_1,
R_2, R_3, R_4\}$. Thus, we may now assume that $\phi'$ is a positive boolean
combination of atomic formulae. We write $\phi'$ in disjunctive normal
form, that is, as a disjunction of conjunctions of atomic formulae.
Observing that $\exists y (\phi_1 \lor \phi_2)$ is equivalent to
$\exists y\: \phi_1\lor \exists y\: \phi_2$, we assume w.l.o.g.\  that
$\phi'$ is just a conjunction of atomic formulae. Any atomic formula of
the form $x \in L$, for some $L$, can be moved outside the existential
quantification, since $\exists y (x \in L \land \psi)$ is equivalent
to $x \in L \land \exists y\: \psi$. All atomic formulae of the form
$y \in L$ can be combined into a single one, since regular languages
are closed under intersection.

Finally we may assume that $\phi'(x,y)$ is a conjunction of a single
atomic formula of the form $y \in L$ (if no such formula appears, we
can write $y \in A^*$), and some combination of atomic formulae among
$x \ssubword y$, $x \ssupword y$, $x=y$, and $x \perp y$. If at least
two of these appear, then their conjunction is unsatisfiable, and so
$\phi(x)$ is equivalent to $x\in \emptyset$. If none of them appear,
$\exists y (y \in L)$ is equivalent to $x\in A^*$ (or to
$x\in\emptyset$ if $L$ is empty). If exactly one of them appears, say
$x\  R\  y$, then $\exists y\ (y\in L \land x R y))$ is equivalent to
$x\in L_\phi$ for $L_\phi=R^{-1}(L)$, which is regular as observed
earlier.
\end{proof}

\begin{theorem}
\label{thm-FO2-dec}
The truth problem for $\FO^2(A^*,\subword)$ is decidable.
\end{theorem}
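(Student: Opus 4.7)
The plan is to derive Theorem~\ref{thm-FO2-dec} as an essentially direct corollary of Lemma~\ref{lem-fo2-reg}, by reducing the truth problem for a closed sentence to emptiness of an effectively computable regular language.

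Given a closed $\FO^2$ sentence $\phi$, I would regard it as an $\FO^2$ formula with at most one free variable (the ``free variable'' simply not occurring in $\phi$) and apply Lemma~\ref{lem-fo2-reg} to effectively compute a regular language $L_\phi \subseteq A^*$ such that $\phi \Leftrightarrow x \in L_\phi$ holds for every $x$. Since $x$ does not actually appear in $\phi$, the truth value of the right-hand side is independent of $x$, so $L_\phi$ is forced to be either $A^*$ (if $\phi$ is true) or $\emptyset$ (if $\phi$ is false). Deciding $\models_{A^*}\phi$ thus reduces to checking nonemptiness of an explicitly constructed regular language, which is standard.

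Equivalently, if one prefers to argue by structural induction on the sentence: boolean combinations are handled by the induction hypothesis; the only nontrivial case is $\exists x\, \psi(x)$ (with $\forall x\, \psi(x)$ handled dually through $\neg \exists x\, \neg \psi(x)$), where $\psi(x)$ has at most one free variable. By Lemma~\ref{lem-fo2-reg}, one computes a regular $L_\psi$ with $\psi(x) \Leftrightarrow x \in L_\psi$, and then $\exists x\, \psi(x)$ is true iff $L_\psi \neq \emptyset$.

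The real work is already done in Lemma~\ref{lem-fo2-reg}, whose technical heart is the rationality of the incomparability relation (and hence the fact that for each $R \in \calR$ and each regular $L$, the preimage $R^{-1}(L)$ is again regular and effectively computable). Consequently, there is no genuine new obstacle in the proof of Theorem~\ref{thm-FO2-dec}; the only thing to check is that the recursive translation described in the proof of Lemma~\ref{lem-fo2-reg} terminates and that each step corresponds to a computable operation on finite automata---Boolean operations, preimages under rational relations, and nonemptiness---all of which are standard.
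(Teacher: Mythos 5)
Your proposal is correct and follows essentially the same route as the paper, which likewise derives the theorem directly from Lemma~\ref{lem-fo2-reg} by noting that for a closed sentence the computed language $L_\phi$ must be $A^*$ or $\emptyset$ according to whether $\phi$ is true or false. Nothing further is needed.
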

\begin{proof}
Lemma~\ref{lem-fo2-reg} provides a recursive procedure for computing
the set of words that make $\phi(x)$ true. When $\phi$ is a closed
formula, this set is $A^*$ or $\emptyset$ depending on whether $\phi$
is true or not.
\end{proof}

\subsection{Hardness for $\FO^2$}
\label{ssec-FO2-hard}

The main question left open in this paper is the complexity of the
decidable $\FO^2$ theory. The recursive procedure described in
Lemma~\ref{lem-fo2-reg} is potentially non-elementary since nested
negations lead to nested complementations of regular languages.

Our preliminary attempts suggest that the question is difficult. At
the moment we can only demonstrate the following lower bound.

\begin{theorem}
\label{thm-FO2-pspace}
Truth checking for the basic logic, restricting to $\FO^2$ sentences which only use letters (that is, words of length $1$) as constants, is $\PSPACE$-hard.
\end{theorem}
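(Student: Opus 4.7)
The plan is to reduce from \TQBF. Given a quantified Boolean formula $\phi \equiv Q_1 x_1\cdots Q_n x_n\,\psi(x_1,\ldots,x_n)$ with $\psi$ quantifier-free, I would take the alphabet $A = \{t_1, f_1, \ldots, t_n, f_n\}$ and encode an assignment $\alpha$ to $x_1,\ldots,x_k$ by any word $v \in A^*$ whose support contains $t_i$ iff $\alpha(x_i) = \True$ and $f_i$ iff $\alpha(x_i) = \False$ for $i \leq k$, and contains neither $t_j$ nor $f_j$ for $j > k$. The key point is that $\alpha(x_i)$ is read off $v$ via the atom $t_i \subword v$, which uses only a single-letter constant.

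Using only letter constants and the $\subword$ relation, I would introduce three quantifier-free auxiliary formulas: a validity predicate $V_k(v)$ saying $v$ encodes some total assignment to exactly $x_1,\ldots,x_k$ (by XOR-like constraints on each pair $t_i\subword v, f_i\subword v$ plus forbidding $t_j, f_j$ for $j > k$); a consistency predicate $E_{k-1}(v,v') \equiv \bigand_{i\leq k-1}(t_i\subword v \Leftrightarrow t_i\subword v')$ saying that $v$ and $v'$ agree on $x_1,\ldots,x_{k-1}$; and the evaluation formula $\psi'(v)$ obtained from $\psi$ by substituting $t_i\subword v$ for each atom $x_i$. I would then build
\[
\Phi_\phi \equiv Q_1 v_1\,\bigl[V_1(v_1) \star_1 Q_2 v_2\,\bigl[(V_2(v_2)\land E_1(v_1,v_2)) \star_2 \cdots Q_n v_n\,\bigl[(V_n(v_n)\land E_{n-1}(v_{n-1},v_n)) \star_n \psi'(v_n)\bigr]\cdots\bigr]\bigr],
\]
where $\star_k$ is $\Rightarrow$ when $Q_k = \forall$ and $\land$ when $Q_k = \exists$, and where $v_k$ stands for $x$ when $k$ is odd and $y$ when $k$ is even. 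Since level $k$ only references $v_k$ (just bound) and $v_{k-1}$ (still bound from the preceding level, not yet rebound), this alternation keeps $\Phi_\phi$ in $\FO^2$; the formula uses only single-letter constants from $A$, and its size is polynomial in $|\phi|$.

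Correctness will follow from the standard back-and-forth between the \TQBF\ game and the quantifier game on $\Phi_\phi$: inductively, at level $k$ the guard $V_k(v_k)\land E_{k-1}(v_{k-1},v_k)$ forces $v_k$ to be a valid encoding extending $v_{k-1}$'s partial assignment by a choice of value for $x_k$, so $Q_k v_k$ faithfully simulates $Q_k x_k$. The point I would state most carefully is that the universal quantifier $\forall v_k$, which syntactically ranges over all of $A^*$, does not give Abelard more power than $\forall x_k$: words $v_k$ that fail the guard make the implication vacuously true, and any two valid encodings of a given assignment agree on every atom $t_i\subword v_k$ and hence yield the same truth value for the remainder of the formula. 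Since \TQBF\ is \PSPACE-complete and the reduction is polynomial-time, this would finish the proof.
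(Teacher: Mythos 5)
Your proposal is correct and follows essentially the same route as the paper: a reduction from \TQBF\ that encodes partial valuations as words over an alphabet of letters $t_i,f_i$, with the two variables $x,y$ alternating to extend the assignment one quantifier level at a time. The only (harmless) divergence is that the paper expresses ``the valuation of the newly quantified word extends the previous one'' by the single atom $x\subword y$ between the two variables, whereas you use the explicit conjunction of biconditionals $E_{k-1}(v,v')$ built from letter-constant atoms; both versions stay within $\FO^2$ using only single-letter constants and give a polynomial-size reduction.
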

\begin{proof}
We reduce from $\TQBF$, the truth problem for quantified boolean
formulae. W.l.o.g.\  a given instance of $\TQBF$ has the form
$\phi' = \exists p_1 \forall p_2 \ldots \exists p_{2n-1} \forall p_{2n} \phi$.

Consider the alphabet $A$ with $4n$ letters, $T_i$ and $F_i$ for each
$1 \leq i \leq 2n$. A word $w \in A^*$ is intended to encode a
(partial) boolean valuation $\VAL_w$ of the variables $p_1, \ldots,
p_{2n}$: if $T_i$ appears in $w$, $\VAL_w(p_i)=\True$, and if $F_i$
appears in $w$, $\VAL_w(p_i)=\False$. We do not consider
``inconsistent'' words, in which both $T_i$ and $F_i$ appear. Observe
that if $x$ and $y$ represent partial valuations and $x \subword y$,
then $\VAL_y$ extends $\VAL_x$. Conversely, any valuation extending
$\VAL_x$ can be represented by a suitable $y'$ with $x \subword y'$.

 For each $i$, let $\varphi_i(w)$ be a formula
that  says ``the domain of $\VAL_w$ is $\{x_1, \ldots, x_i\}$'':
\[\bigand_{1 \leq j \leq i} ((T_j \subword w \lor F_j \subword w) \land \neg (T_j \subword w \land F_j \subword w)) \land \bigand_{i < j \leq 2n} (T_j \not \subword w \land F_j \not \subword w)\]
We now translate the given $\TQBF$ instance $\phi'$ into an $\FO^2$ sentence $\psi'$ in our logic:
\[
\begin{split}
 \psi' = \exists x (\varphi_1(x) \land \forall y ( (\varphi_2(y) \land x \subword y) \implies \exists x (\varphi_3(x) \land y \subword x \land \ldots \\
\land \exists x (\varphi_{2n-1}(x) \land y \subword x \land \forall y ( (\varphi_{2n}(x) \land x \subword y) \implies \psi )) \ldots )))
\end{split}
\]
where $\psi$ is obtained from $\phi$ by replacing each $p_i$ with $T_i \subword y$.

The formula $\psi'$ uses the two variables $x$ and $y$ alternately, to build up suitable valuations with the appropriate alternation of $\exists$ and $\forall$. It is easy to see that $\phi'$ is true if and only if $\psi'$ is true.

Finally, it was not necessary to assume that $\phi'$ had a strict alternation of $\exists$ and $\forall$, but it makes the presentation of the proof simpler.
\end{proof}

\makeatletter{}\section{Concluding remarks}
\label{sec-concl}

We considered the first-order logic of the subsequence ordering and
investigated decidability and complexity questions. It was known that the $\Sigma_3$ theory is
undecidable and that the $\Sigma_1$ theory is decidable. We settled
the status of the $\Sigma_2$ fragment by showing that it has an
undecidable theory, even when restricting to formulae using no
constants. To remain in the $\Sigma_2$ fragment, our reduction encoded
language-theoretic problems rather than undecidable number-theoretic
logical fragments as is more usual.

We also showed that the $\FO^2$ theory of the subsequence ordering is
decidable using automata-theoretic techniques. The $\FO^2$ fragment is
quite interesting. We note that it encompasses modal logics where the
subsequence ordering correspond to one step (or its reverse) as used
in the verification of unreliable channel systems.

Finally, we provided some new complexity results like
Theorems~\ref{thm-S1-NP} and~\ref{thm-FO2-pspace}. 
\\

We can list a few interesting directions suggested  by this work.
First, on the fundamental side,
the main
question left open is the precise complexity of the $\FO^2$ theory.

Regarding applications, it would be interesting to see how the
decidability results can be extended to slightly richer logics
(perhaps with some extra functions or predicates, or some additional
logical constructs) motivated by specific applications in automated
reasoning or program verification.
\\

\noindent
\textbf{Acknowledgements.}
We thank Dietrich Kuske who outlined  the proof of Theorem~\ref{thm-FO2-dec}.

\bibliographystyle{plain}\bibliography{subwords}

\appendix
\makeatletter{}\section{Proof of Lemma~\ref{lem-simnnp1}}

Assume $|u|=|v|=n+1$ and $u\neq v$ as in the statement of the Lemma.

We say that a word $w$ \emph{distinguishes $u$ and $v$} if $w$ is a
subword of exactly one of $u$ and $v$. We have to prove that there
exists such a distinguisher $w$ with $|w|\leq n$.
\\

Writing a word $w\in A^*$ under the form $w = a_1^{n_1} \ldots
a_k^{n_k}$ where each $a_i$ is a letter so that $a_i \neq a_{i+1}$ for
all $i=1,\ldots,k-1$ and $n_i \geq 1$ for all $i=1,..,k$ is called the
\emph{block factorisation} of $w$. Here $k$ is the number of blocks in
$w$. We now consider several cases:

\begin{itemize}
\item
Assume that  $u$ has
only one block. Then $u = a^{n+1}$ for some $a \in A$, and some
one-letter word distinguishes $u$ and $v$. The same reasoning applies
if $v$ has only one block.

\item
Assume that $u$ and $v$ have at least two blocks each,
and there is some letter $a \in A$ such that $|u|_a\neq |v|_a$.
Then $a^k$ distinguishes $u$ and $v$ for some $k \leq n$.

\item
We are left to deal with cases where $u$ and $v$ have have at least
two blocks, and have the same Parikh image, that is, $|u|_a = |v|_a$
for every $a \in A$.

Assume now that $u$ has exactly
two blocks. Then $u \in a^+b^+$ for some $a,b \in A$ with $a \neq b$.
Since $v$ has the same number of $a$'s and $b$'s but differs from $u$,
we must have $ba \subword v$. But $ba \not \subword u$, so $ba$ is a
distinguisher (here we use the assumption that $n \geq 2$).

\item
Finally assume that $u$ has at least three blocks. Pick a block $B$ of $u$ which is neither the first nor the last, and let $a$ be the unique letter belonging to $B$. Let $\ell=|u|_a$ and write $u$ as $u = s_0 a s_1 a\ldots a s_\ell$. Then
\[|s_0| + \ldots + |s_\ell| = (n+1) - \ell\:.\]
At least two of the numbers $|s_0|, \ldots, |s_\ell|$ are strictly positive, since the two blocks immediately to the left and right of $B$ both exist, and both do not have $a$. Thus for all $i$, $|s_i| < (n+1) - \ell$.

Since
$|v|_a=\ell$, we can write $v = t_0at_1a\ldots at_\ell$. We assume $u
\sim_n v$ and obtain a contradiction. For
each $i$ such that $0 \leq i \leq \ell$, consider the word $z_i = a^i
s_i a^{\ell-i}$. We have $|z_i| \leq n$, and $z_i \subword u$. Since
$u \sim_n v$, we have $z_i \subword v$. Since both $z_i$ and $v$ have
exactly $\ell$ occurrences of $a$, we have $s_i \subword t_i$. This
holds for all $i$, so $u \subword v$. But $|u| = |v|$, so $u = v$,
which is a contradiction.
\end{itemize}

\end{document}